\definecolor{Red}{rgb}{1,0,0}
\definecolor{Blue}{rgb}{0,0,1}
\definecolor{Olive}{rgb}{0.41,0.55,0.13}
\definecolor{Green}{rgb}{0,1,0}
\definecolor{MGreen}{rgb}{0,0.8,0}
\definecolor{DGreen}{rgb}{0,0.55,0}
\definecolor{Yellow}{rgb}{1,1,0}
\definecolor{Cyan}{rgb}{0,1,1}
\definecolor{Magenta}{rgb}{1,0,1}
\definecolor{Orange}{rgb}{1,.5,0}
\definecolor{Violet}{rgb}{.5,0,.5}
\definecolor{Purple}{rgb}{.75,0,.25}
\definecolor{Brown}{rgb}{.75,.5,.25}
\definecolor{Grey}{rgb}{.5,.5,.5}
\theoremstyle{plain}
\newtheorem{theorem}{Theorem}
\newtheorem{proposition}{Proposition}
\newtheorem{conj}{Conjecture}
\newtheorem{theorem*}{Theorem}   % no section numbers
\newtheorem{lemma*}{Lemma} % no section numbers
\newtheorem{corollary*}{Corollary} % no section numbers
\newtheorem*{remark*}{Remark}
\newlength{\widebarargwidth}
\newlength{\widebarargheight}
\newlength{\widebarargdepth}
\theoremstyle{definition}
\def\cD{{\cal D}}
\def\cE{{\cal E}}
\def\cQ{{\cal Q}}
\def\cR{{\cal R}}
\def\thetahat{\widehat{\theta}}
\def\Var{\operatorname{Var}}
\def\mprob{\mathbb P}
\newcommand{\real}{\ensuremath{\mathbb{R}}}
\newcommand{\defn}{\ensuremath{:  =}}
\newcommand{\E}{\ensuremath{\mathbb{E}}}
\renewcommand{\P}{\ensuremath{\mathbb{P}}}
\begin{document}

\begin{center}

{\bf{\Large{Teaching and learning in uncertainty}}}

\vspace*{.25in}

\begin{tabular}{ccc}
{\large{Varun Jog}} & \hspace*{.2in} & {\large{Po-Ling Loh}} \\
{\large{\texttt{vjog@wisc.edu}}} & & {\large{\texttt{ploh@stat.wisc.edu}}} \\
{\large{Department of ECE}} & & {\large{Department of Statistics}} \\
{\large{University of Wisconsin - Madison}} & & {\large{UW-Madison \& Columbia University}}
\end{tabular}

%\vspace{.2in}

%Departments of Electrical \& Computer Engineering$^\dagger$ and Statistics$^\ddagger$ \\
%University of Wisconsin - Madison \\
%Madison, WI 53706 \\

%\begin{tabular}{ccc}
%{\large{Varun \& Po-Ling Log}} \\
%{\large{\texttt{vjog@ece.wisc.edu}}} \vspace{.2in}
%\end{tabular}

%Department of Electrical and Computer Engineering \\
%Grainger Institute for Engineering \\
%University of Wisconsin - Madison \\ Madison, WI 53706

\vspace*{.2in}

October 2020

\vspace*{.2in}

\end{center}

\begin{abstract}
We investigate a simple model for social learning with two agents: a teacher and a student. The teacher's goal is to teach the student the state of the world; however, the teacher himself is not certain about the state of the world and needs to simultaneously learn this parameter and teach it to the student. We model the teacher's and student's uncertainties via noisy transmission channels, and employ two simple decoding strategies for the student. We focus on two teaching strategies: a ``low-effort" strategy of simply forwarding information, and a ``high-effort" strategy of communicating the teacher's current best estimate of the world at each time instant, based on his own cumulative learning. Using tools from large deviation theory, we calculate the exact learning rates for these strategies and demonstrate regimes where the low-effort strategy outperforms the high-effort strategy. Finally, we present a conjecture concerning the optimal learning rate for the student over all joint strategies between the student and the teacher.  %Our primary technical contribution is a detailed analysis of the large deviation properties of the sign of a transient Markov random walk on $\mathbb Z$. 
\end{abstract}

%\begin{IEEEkeywords}
%Social learning, large deviations, Markov chains
%\end{IEEEkeywords}

\section{Introduction}

Individuals in a society may learn about their environment directly through their own experiences, or vicariously via communication with other members of the society. Such interactions drive the exchange of ideas, technology, news, and opinions, and are critically important to the social and economic development of a society. However, understanding and predicting the effects of social interaction on society is a hard problem: each individual's opinion is dynamic and depends on his or her own biases, observations, and social interactions. The theoretical question of how agents learn through social interactions has consequently received much attention in the past few decades, and a number of mathematical models have been proposed to analyze social learning phenomena, such as those  detailed in Chamley~\cite{Cha04} and Mossel and Tamuz \cite{MosTam17}. 

Social learning models generally comprise an unknown state of the world and a number of agents. These agents have private observations of the state and use it to take actions to achieve a certain goal, such as maximizing their utility functions. Often, agents are able to observe the actions of some or all of the other agents, which they can use to glean more information about the state of the world and thereby play better actions. Broadly speaking, one is interested in analyzing the following questions in these models: (a) Convergence: Do the agents' actions eventually converge? (b) Agreement: Given convergence, do the agents agree? (c) Learning: Given agreement, is the unanimous opinion the true state of the world? and (d) Given learning, how fast does learning take place? 

In this paper, we focus on analyzing the \emph{rate of learning} posed in question (d) for a simple model with two agents. Our work is most closely related to the work of Vives~\cite{Viv93}, Jadbabaie et al.~\cite{MolEtAl13, MolEtAl17}, and Harel et al.~\cite{HarEtal14}, which also investigate learning rates in different social learning models. The specific learning model we consider is motivated by the work of Harel et al.~\cite{HarEtal14}, which we will describe in detail in Section~\ref{section: model} and contrast with our model.

A brief overview of our social learning model is as follows: The unknown state of the world $\Theta$ is drawn uniformly from $\{-1, +1\}$. The first agent, whom we call the teacher, receives repeated observations of $\Theta$ through a binary symmetric channel with flipping probability $p$. At each time instant, the teacher can transmit one bit over another binary symmetric channel with flipping probability $q$ to the second agent, whom we call the student. The teacher's goal is to ensure that the student learns the state of the world. Our goal is to understand how the rate of learning of the student depends on the joint strategies of the teacher and the student, where we assume that both the teacher and student are aware of each other's strategies. We analyze two particular student strategies: (i) A strategy where the student simply takes an average (or majority vote, when the state of the world is binary) over all observations received from the teacher; and (ii) a strategy where the student---shrewdly cognizant of the fact that the teacher is also learning from his own observations over time---only averages a final segment of observations, which she assumes to more accurately reflect the state of the world than the initial observations.

The main mathematical tools we use in this paper are derived from the theory of large deviations, specifically concerning the large deviation properties of the sign of a transient Markov random walk on $\mathbb Z$. We show that the rate function of this process can be explicitly calculated; moreover, it has a surprisingly neat closed-form expression that can be used in our subsequent analysis. When the state of the world is binary, our results show that when the student employs either of the two learning strategies discussed above, the relative ordering of teacher strategies in terms of the learning rate of the student generally depends on the noise parameters of the teacher's and student's channels, in a manner we can make explicit. We also analyze a setting where the state of the world is a continuous parameter and the learning rate is quantified by the variance of the student's estimator, and reach a markedly different conclusion that the laziest strategy of the teacher already leads to an optimal rate of learning for the student.

The remainder of the paper is organized as follows: In Section \ref{section: model}, we describe our model in detail and discuss the strategies employed by the teacher and student. In Section \ref{section: markov}, we develop the main technical tools used in analyzing the learning rate of the student when the state of the world is binary; the rates of learning for the two student strategies are subsequently analyzed in Sections~\ref{SecMajority} and~\ref{SecEpsMajority}. In Section~\ref{SecGaussian}, we discuss the somewhat different case of Gaussian learning with continuous-valued parameters. Finally, we conclude the paper in Section \ref{section: end} with a discussion of open problems and teaching philosophy.

\textbf{Notation:} For a real parameter $p \in [0,1]$, we write $\bar p$ to denote the quantity $1-p$. For a random variable $X_n$, a set $A$, and a function $f(A)$, we write $\mprob(X_n \in A) \approx e^{-n f(A)}$ to mean that $\mprob(X_n \in A) = e^{-n(f(A) + o(1))}$. We write $D(a || b) = a \log (a/b) + \bar a \log (\bar a/ \bar b)$ to denote the Kullback-Leibler divergence between the Bernoulli($a$) and Bernoulli($b$) distributions. We write $\textbf{1} \in \real^n$ to denote the $n$-dimensional vector of all 1's.

%%%%%

\section{Problem statement and related work}
\label{section: model}

We consider a simple model of social learning with two agents: a teacher and a student. Suppose both agents are trying to learn an unknown binary random variable $\Theta$, which is called the state of the world. We assume that $\Theta$ takes values in the set $\{-1, +1\}$, uniformly at random. At each time $i \geq 1$, the teacher observes a noisy version of $\Theta$ through a binary symmetric channel with parameter $p \in [0, 1/2)$; i.e.,
\begin{align*}
\P(Y_i = \Theta) = 1-p, \quad \text{ and } \quad \P(Y_i = -\Theta) = p.
\end{align*}
Conditioned on $\Theta$, the random observations $\{Y_i\}_{i \geq 1}$ are independent and identically distributed, as above. The student does not make any direct observations (noisy or otherwise) of $\Theta$, and may only learn its value from the teacher.

At each time $i$, the teacher communicates a binary random variable $\hat X_i$, which is a (possibly random) function of the history of observations $\{Y_j\}_{1 \leq j \leq i}$, and the student receives a noisy version of $\hat X_i$, which we call $Z_i$. The communication channel from the teacher to the student is assumed to be a binary symmetric channel with parameter $q \in [0, 1/2)$. The student's estimate of $\Theta$ after observing $\{Z_j\}_{j \leq i}$ is denoted by $\hat \Theta_i \in \{-1, +1\}$. We refer to the sequence of random variables $\{\hat X_i\}$ as the teacher's strategy, and the decoding rules $\{\hat \Theta_i\}$ as the student's learning strategy. For fixed teaching and learning strategies, the student's rate of learning is defined as follows:
\begin{align}\label{eq: learning rate}
\cR = \limsup_{n \to \infty} \left\{-\frac{1}{n} \log \P\left( \hat \Theta_n \neq \Theta \right)\right\}.
\end{align}
Notice that the teacher is guaranteed to the learn the state of the world eventually, owing to his repeated noisy observations of $\Theta$.

\paragraph{Comparison to Harel et al.~\cite{HarEtal14}:} Despite the differences between our model and that of Harel et al.~\cite{HarEtal14}, we also uncover certain counterintuitive phenomena in our two-agent setting. The model in Harel et al.~\cite{HarEtal14}\footnote{See version 1 of the arXiv manuscript. In later versions, the model was extended to more than two agents, but the key ingredients of the analyses may be found in the analysis of the two agent model.} also considers two agents $A$ and $B$ and an unknown binary state of the world $\Theta$. The agents receive i.i.d.\ observations $\{A_i\}_{i \ge 1}$ and $\{B_i\}_{i \ge 1}$ of $\Theta$ through their respective binary symmetric channels. At each time $i$, the agents also form their binary-valued best estimates $\widehat \theta_A$ and $\widehat \theta_B$ of $\Theta$. The information available to the agents is modeled in two settings: (i) $A$ can observe $B$'s estimates of $\Theta$, but not vice versa; and (ii) $A$ and $B$ can both observe each other's estimates. The authors made the surprising observation that agent $A$'s learning rate is higher in setting (i) than in (ii)---contrary to the intuition that setting (ii) involves a greater exchange of information, so one might expect $A$ to learn faster. The authors attribute this counterintuitive result to a phenomenon they call ``rational groupthink.'' 

The model studied in our paper differs from that of Harel et al.~\cite{HarEtal14} in three key ways: First, the second agent, the student, does not have any private observations that allow her to learn. Any information she receives about the state of the world follows from a noisy interaction with the teacher. Second, the agents are not necessarily Bayesian, but instead perform heuristic calculations to form their opinions. Rich bodies of literature studying both Bayesian and non-Bayesian models of social learning exist, which we shall describe in more detail in Section~\ref{section: signal}. Third, the model proposed in Harel et al.~\cite{HarEtal14} involves \emph{pure information externalities}, where each agent receives a payoff which depends only on their action and the state of the world. In contrast, our model does not include payoff functions for agents; rather, the teacher and student are jointly working to optimize the student's learning rate. Similar features appear in team decision theory and control, which we briefly comment on in relation to our setting in Section~\ref{section: control}.

Despite the differences between our model and that of Harel et al.~\cite{HarEtal14}, we also uncover certain counterintuitive phenomena in our two-agent setting. In particular, we observe that ``helpful" social interactions, where the teacher always tries to transmit his best guess to the student, may actually hinder the student's rate of learning. \\

In Section~\ref{SecGaussian} below, we will introduce an alternative model for teacher/student learning when the state of the world is a continuous real number. The rate of learning of the student will then be quantified by the variance of the student's estimate, rather than the probability of error. We will introduce the appropriate terminology later; in the remainder of this section and throughout Sections~\ref{SecMajority} and~\ref{SecEpsMajority}, we will adopt the setting and notations for the binary model described above.

%%%%%

\subsection{Student strategies}

The student's learning rate depends on both the teacher's strategy and her own decoding strategy. When the student is aware of the teacher's strategy, the optimal learning rate is achieved when the student uses a maximum likelihood decoder to arrive at her estimate of $\hat \Theta_n$. However, as is well-documented in the literature on social learning, a fully rational model often places unreasonable computational demands on Bayesian agents \cite{MosTam17}. Assuming that agents are non-Bayesian serves two goals: it makes the model more realistic by reducing its complexity; and in some cases, it also helps make the model mathematically tractable. In this paper, we will consider two simple non-Bayesian student strategies, which we now describe.

\textbf{Majority rule:} This is perhaps the simplest possible strategy for the student, defined as a majority vote over her observations:
\begin{align*}
\hat \Theta_n = 
\begin{cases}
+1 &\text{ if } \quad \frac{\sum_{i=1}^n \mathbbm{1}(Z_i = +1)}{n} \geq \frac{1}{2},\\
-1 &\text{ otherwise.}
\end{cases}
\end{align*}
Note that in some cases (e.g., when the teacher uses the simple forwarding strategy defined in the next section), the majority strategy for the student corresponds to the MLE; however, this is not generally true when the teacher employs other strategies.

\textbf{$\epsilon$-majority rule:} This is a generalization of the majority learning rule, where the student's estimate is a majority vote among the latter $\epsilon n$ observations, for a parameter $\epsilon \in (0, 1]$:
\begin{align*}
\hat \Theta_n = 
\begin{cases}
+1 &\text{ if } \quad \frac{\sum_{i=\lceil(1-\epsilon)n\rceil}^n \mathbbm{1}(Z_i = +1)}{\lfloor \epsilon n\rfloor} \geq \frac{1}{2},\\
-1 &\text{ otherwise.}
\end{cases}
\end{align*}
The rationale is that the student is aware that the teacher is learning as time progresses, so she may be skeptical of the teacher's initial transmissions and prefer to place more weight on the most recent observations. However, since analyzing the learning rate of the student becomes rather complicated for arbitrary weighting strategies, we will restrict our analysis to a strategy that places zero weight on the first $(1-\epsilon)n$ observations and equally weights the remaining observations. \\

Note that the majority rule is a special case of the $\epsilon$-majority strategy when $\epsilon = 1$. However, as our analysis will reveal, the optimal choice of $\epsilon$ depends in a nontrivial manner on the channel parameters $p$ and $q$. Thus, we will generally be interested in the behavior of $\epsilon$-majority learning when $\epsilon$ is optimized to the parameters $p$ and $q$, operating under the assumption that such a strategy would only be employed if the student had some knowledge of the channel parameters. We will also analyze the majority learning strategy on its own.

%%%%%

\subsection{Teacher strategies}

%In a majority learning model, the student will learn what they hear most often, and therefore the teacher should try to teach the ``correct lesson" more often than the ``wrong lesson." What are be some natural strategies that the teacher might employ?

We now turn to describing several teaching strategies that we will analyze in our paper.

%{\red
%Furthermore, in case (ii), we assume that the student is aware of the noise parameters governing the conditional distributions of both the teacher's and student's observations, and makes the optimal choice of $\epsilon$ with respect to these parameters. We then characterize the overall learning rate of the student when the teacher employs different strategies: a ``lazy" strategy, where he simply forwards his observations of the state of the world directly to the student; a cumulative teaching strategy, where he sends his best guess of the state of the world to the student at each time step; and a hybrid strategy, specifically designed for learning strategy (ii), in which the teacher learns during the first $(1-\epsilon)n$ time steps and then repeatedly transmits his best guess of the state of the world in the remaining $\epsilon n$ time steps. Although the third teaching strategy should not be expected to produce an optimal learning rate---since the teacher effectively stops learning after the initial $(1-\epsilon)n$ observations---it nonetheless leads to a convenient closed-form expression for the student's learning rate, and is a reasonable mathematical model for a teaching strategy commonly encountered in practice.
%}

\textbf{Simple forwarding:} A ``lazy" strategy for the teacher that requires no learning on his part is to put $\hat X_i = Y_i$; i.e., simply forward his observation at each time step directly to the student.

\textbf{Cumulative teaching:} In contrast to the lazy strategy of simply forwarding information, the teacher might follow a strategy of always transmitting his current best estimate of $\Theta$, obtained by applying the majority rule to his observations $\{Y_i\}_{1 \leq i \leq n}$. %Analyzing the learning rate for this strategy is the main problem we tackle in this paper. \\

Note that the cumulative teaching strategy clearly satisfies the property that after some finite time, the process $\{\hat X_n\}$ is identically equal to $\Theta$. In contrast, the simple forwarding strategy never converges in this manner, so the teacher is correct more often in the cumulative teaching strategy if $n$ is sufficiently large. This is the intuitive reason why one might expect the cumulative teaching strategy to dominate the simple forwarding strategy; however, as we will see later, the relative merits of the two strategies are intricately linked to the values of $p$ and $q$. %In what follows, we calculate the exact learning rate for this strategy.

\textbf{$\epsilon$-teaching:} We will also analyze a teaching strategy that is tailored to the $\epsilon$-majority student strategy. In this strategy, the teacher transmits no information during the first $(1-\epsilon)n$ time steps (e.g., always transmitting a default value of  $1$), and then repeatedly transmits his best guess based on the first $(1-\epsilon)n$ observations in the remaining time steps:
\begin{equation*}
\hat{X}_i =
\begin{cases}
+1 & \text{ if } \quad \frac{\sum_{i=1}^{\lfloor (1-\epsilon) n\rfloor} \mathbbm{1}(Y_i = +1)}{\lfloor (1-\epsilon) n \rfloor} \ge \frac{1}{2}, \\
-1 & \text{ otherwise},
\end{cases}
\qquad \text{for } i \ge \lceil (1-\epsilon)n \rceil.
\end{equation*}
Evidently, the $\epsilon$-teaching strategy could potentially be improved if the teacher transmitted more information in the first $(1-\epsilon)n$ time steps, or transmitted a more sophisticated estimator based on cumulative learning in the last $\epsilon n$ time steps. However, we again focus on analyzing this simpler strategy since it leads to closed-form expressions for learning rates that may be more easily compared to other student/teacher strategies.

%%%%%
\begin{remark*}
The definition of learning rate in equation~\eqref{eq: learning rate} is motivated, at least in part, for reasons of mathematical tractability. It is natural to wonder how large $n$ should be so that the probability of error is close to the approximation resulting from an asymptotic calculation. Calculating the exact probability of error for large $n$ is not computationally feasible for $n > 20$; however, the error probability can be approximated using Monte Carlo simulations for moderate values of $n$. Based on our experiments for the ``cumulative teaching + majority learning'' strategy, we observe that although the values of $p$ and $q$ generally affect the outcome, in most cases, the asymptotic approximation becomes reasonably accurate when $n \approx 100$. Figure~\ref{fig: convergence} displays plots for the error probability vs.\ $n$ when $(p,q) = (0.1, 0.3)$ and $(p,q) = (0.3, 0.1)$.
\end{remark*}

\begin{figure}[h]
\begin{center}
\begin{tabular}{cc}
\includegraphics[width=3in]{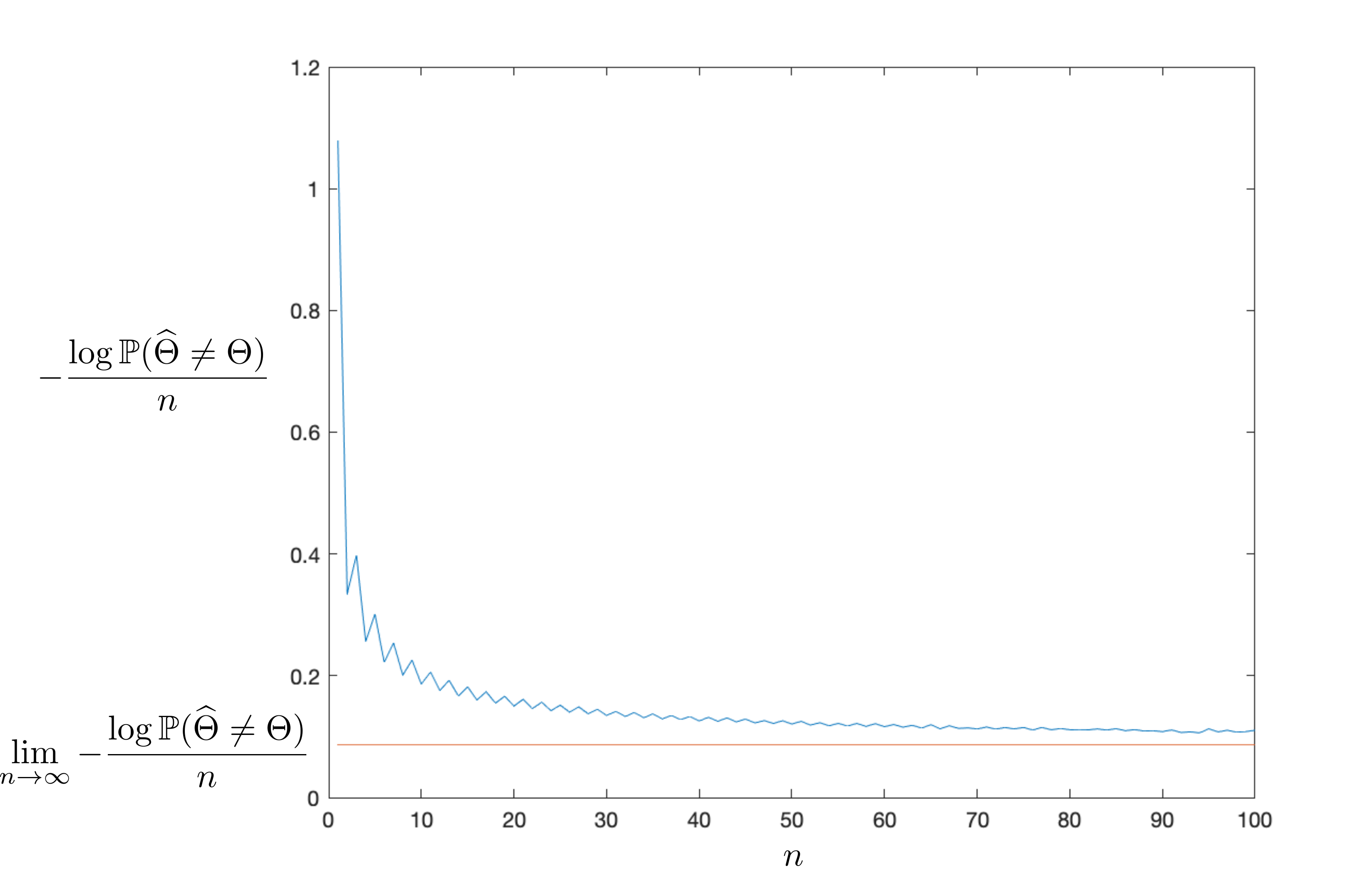} & \includegraphics[width=3in]{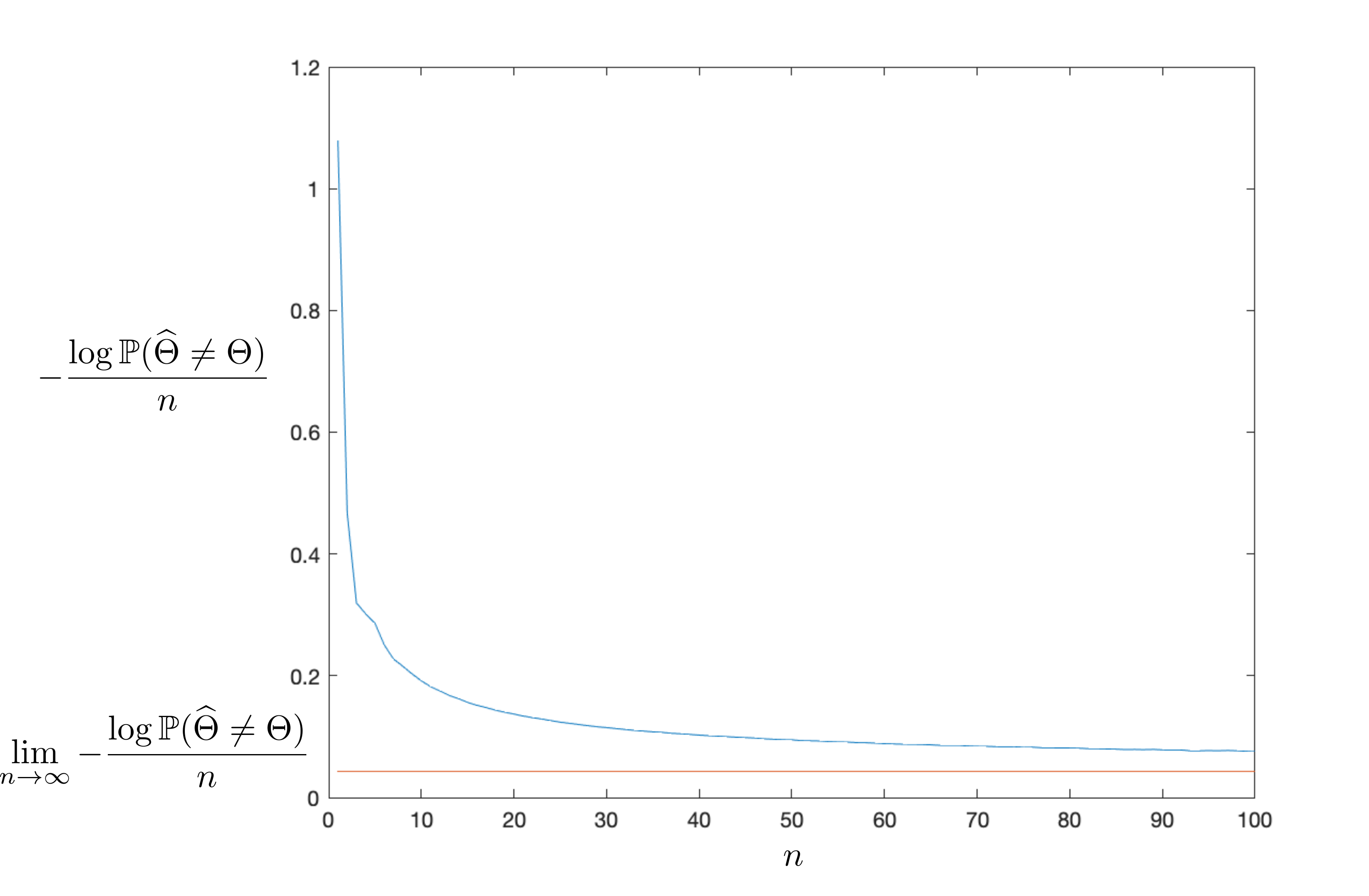} \\
(a) $p = 0.1$ and $q = 0.3$ & (b) $p = 0.3$ and $q = 0.1$
\end{tabular}
\caption{Plots showing the convergence of the non-asymptotic error probability of the student in the ``cumulative teaching + majority learning'' strategy to the exact learning rate defined by equation~\eqref{eq: learning rate}.} \label{fig: convergence}
\end{center}
\end{figure}

\subsection{Overview of results}

To aid readability, we now provide a preview of our main results. As mentioned earlier, we will focus on characterizing the learning rate of different strategies when the student's strategy is fixed. The main message of our paper is that certain teacher strategies are better than other strategies for particular regimes of the channel parameters $(p,q)$:
\begin{itemize}
\item When the student employs the majority rule strategy, neither the simple forwarding strategy nor the cumulative teaching strategy strictly dominates the other. In Section \ref{SecMajority}, we calculate the analytical expressions for the learning rates in both cases. Comparing them for different values of $p$ and $q$, we observe that for small values of $q$ (which one may interpret as sharp and attentive students), the simple forwarding strategy dominates. However, we also note that for all $q$ larger than $\approx 0.15$, the simple forwarding strategy is worse than the cumulative teaching strategy for all values of $p$; i.e., no matter how ``bad'' the teacher (corresponding to large values of $p$), a moderately attentive student still benefits from a cumulative teaching strategy. 
\item When the student employs the $\epsilon$-majority strategy, neither the simple forwarding strategy nor the cumulative teaching strategy strictly dominates the other. However, we can show that the cumulative teaching strategy uniformly dominates the $\epsilon$-teaching strategy for all values of $p$ and $q$. We can obtain closed-form expressions for the learning rate of the latter teaching strategy; for the former teaching strategy, we obtain an expression that can be computed by solving an optimization problem using computer software. Comparing the cumulative teaching strategy with the simple forwarding strategy, we observe that the cumulative teaching + $\epsilon$-majority learning strategy dominates the simple forwarding + majority learning strategy for almost all values of $p$ and $q$, except when $q$ is very small.
 
\end{itemize}
%When analyzing the $\epsilon$-majority student strategy, we assume that the student is aware of the channel parameters, as well as the strategy employed by the teacher, and optimizes $\epsilon$ to obtain the best learning rate.

In the case of Gaussian teaching and learning, we show that rather surprisingly, the majority learning rule for the student and simple forwarding strategy for the teacher is jointly dominant. Notably, neither of these teaching or learning strategies requires knowledge of the Gaussian channel parameters. This underscores a fundamental difference between the social learning problem in discrete vs.\ continuous state spaces.
%%%%%

\subsection{Related literature}
\label{section: related work}

The problem described in the section above possesses additional connections to various research threads spanning a diverse range of topics. We briefly highlight some of these below.

\subsubsection{Social learning and signaling games}
\label{section: signal}

The economics literature contains a vast body of work concerning social learning. We refer the reader to the survey articles by Mossel and Tamuz~\cite{MosTam17} and Mobius and Rosenblat~\cite{MobRos14} for a broad coverage, and only highlight a subset of this literature that is most relevant to our work.

Aumann's Agreement Theorem~\cite{Aum76} may be viewed as an early example of social learning. Aumann showed that if two agents agree on the prior of state of the world and their posteriors are ``common knowledge,'' their posteriors must be identical; i.e., they cannot agree to disagree. Geanakoplos and Polemarchakis~\cite{GeaPol82} studied the speed of convergence of the agents' posteriors when they exchange and update their posteriors at each time step. Social learning situations where agents update their actions repeatedly---based on their private signals and the (continuously updating) actions of other agents---have been studied in numerous popular models in economics. Banerjee~\cite{Ban92}, Bikhchandani et al.~\cite{BikEtal92}, and Smith and Sorenson~\cite{SmiSor00} proposed models of social learning which give rise to the phenomenon of herding, where agents ``herd'' to the same (possibly wrong) action.

Gale and Kariv~\cite{GalKar03} proposed a social network model, wherein agents occupy vertices of a graph and are able to observe the actions of their neighbors. Akin to Aumann's result~\cite{Aum76}, Gale and Kariv showed that fully Bayesian (or fully rational) agents converge to the same action under suitable conditions. Mossel et al.~\cite{MosEtal14} studied learning in this model by incorporating a state of the world that dictates private signals of the agents. The model studied in our paper may be considered as operating on a social network graph with one edge: the student can noisily observe the teacher's actions. Nonetheless, the information structure is somewhat different in our model, which is more closely related to the model proposed in Harel et al.~\cite{HarEtal14}.

Another notable aspect of our model is that the two agents are non-Bayesian. Learning models with non-Bayesian agents have been studied by various authors, e.g., DeGroot~\cite{Deg74}, Ellison and Fudenberg~\cite{EllFud93}, Bala and Goyal~\cite{BalGoy98}, Rahimian and Jadbabaie~\cite{RahJad16}, Haz{\l}a, Jadbabaie, Mossel, and Rahimian \cite{JadEtal17} and Molavi et al.~\cite{MolEtAl17}. Non-Bayesian models are relevant for two reasons: (1) humans are not Bayesian; and (2) it can be challenging to analyze the behavior of Bayesian agents. This is remarked upon in Gale and Kariv~\cite{GalKar03} in the study of a graph with just three nodes and also explored further in Kanoria and Tamuz~\cite{KanTam13}. Consequently, we have chosen to consider a model in which agents make heuristic rather than Bayesian decisions.

\subsubsection{Team decision theory and control}
\label{section: control}

Team decision theory, as described in Ho~\cite{Ho80}, considers agents with correlated sources of private information who take coordinated actions to optimize a payoff function. The agents may communicate beforehand to agree on any protocol of their choice. Similarly, our model involves a teacher and student who share a common goal and are free devise a joint strategy; however, the specifics of the mathematical model in Ho~\cite{Ho80} differ from ours in several ways, particularly with respect to the repeated observations and actions in our setting. 

Our model also shares several similarities with Witsenhausen's counterexample from stochastic control~\cite{Wit68}. This counterexample, which involves two agents---one with a perfect observation of the state but expensive control, and the other with a noisy observation of the state (after it has been modified by the first agent) but free control---demonstrates how non-classical information structures can alter the nature of optimal control solutions. The goal of both agents is to drive  the state to 0. The problem may be reduced to finding the optimal strategy of the first agent, since the second agent can use a simple MMSE decoder. In our model, the teacher has, in some sense, direct access to the state of the world; the teacher can then communicate this state to the student in a noisy manner. Just as in the optimal control problem, our goal is to determine the teacher's optimal strategy, since the student's optimal strategy is simply the MAP decoder once the teacher's strategy is fixed. (A notable difference is the absence of a ``control'' aspect in our setting.) Nonetheless, the difficulty in identifying the optimal strategies in Witsenhausen~\cite{Wit68} and our paper exemplifies how asymmetric information structures can lead to very hard problems that might initially appear innocuous.

\subsubsection{Fluctuation of random walks}

Analyzing the sign of the random walk appearing in our paper may be seen as a part of the broader literature concerning fluctuation theory in probability \cite{And74}. These works explore the distributions of various quantities such as the time to reach the maximum or minimum, and (of relevance to us) the sojourn time, which is the duration of time when the random walk is positive. Fluctuation theory is also closely related to ballot theorems in probability, where the probability of a random walk being positive, negative, or lying within certain bounds is studied~\cite{AddRee08}. Our contribution differs due to its focus on the large deviation properties of the sign, as opposed to characterizing its distribution as in Chung and Feller~\cite{ChuFel49}, Andersen~\cite{And55}, and several works along these lines. We note that the paper of Andersen~\cite{And55} contains a result concerning the sign of a random walk that may be used to prove our Theorem~\ref{thm: magic} more directly. However, we have retained our original analysis (which is a brute force computation, as opposed to the specialized result from Anderson~\cite{And55}), because the technique extends to non-binary random walks and the calculation of rate functions for the duration of time the random walk lies within a certain interval. A possible generalization of our model to continuous random walks may be possible to study using arcsine laws~\cite{Lev40, Dur19}, but we do not explore that in this paper.

\subsubsection{Communication theory}

A model identical to ours was proposed independently and concurrently in Huleihel et al.~\cite{HulEtal19}, who studied how to reliably send one bit across a cascade of binary symmetric channels. For simplicity, Huleihel et al.~\cite{HulEtal19} assume that all the binary symmetric channels in the cascade are identical, with flipping probability $\delta$. Just as in this paper, the authors examine the learning rate (which they call the ``information velocity") as a function of the number of channels $k$ in the cascade. The paper establishes that when $\delta \to 1/2$, the ratio of the learning rates for $k=2$ and $k=1$ is at least $3/4$, and proposes an interesting conjecture that the ratio should be arbitrarily close to $1$. Translated into our setting, this would mean that when the teacher and student both have very noisy channels, the student learns as fast as the teacher; i.e., the channel from the teacher to the student can be made ``clean.'' We shall further discuss the conjecture from Huleihel et al.~\cite{HulEtal19} in our concluding section along with other conjectures of interest.
%%%%%

\section{Technical results}
\label{section: markov}

The following results, related to the sign sequence of a random walk and its derived properties, will be used throughout our paper. They provide the main technical results underlying our calculations of explicit learning rates. Throughout our analysis, we assume WLOG that $\Theta = +1$.

%Let $\hat X_n = \sign\left(\sum_{i=1}^n Y_i\right)$ denote the value transmitted by the teacher at time $n$, which is his current best guess of the state of the world at time $n$.
The random process $X_n \defn \sum_{i=1}^n Y_i$, recording the cumulative observations of the teacher, may be modeled as a random walk $\mathbb Z$ with the following transition probabilities:
\begin{align*}
&p(X_{n+1} = i+1 | X_n = i) = 1-p, \qquad \text{and}\\
&p(X_{n+1} = i-1 | X_n = i) = p,
\end{align*}
where $p < 1/2$. Notice that this random walk is transient; i.e., for every $i \in \mathbb Z$, the random walk visits state $i$ finitely many times, with probability $1$. Since $p < 1/2$, the random walk eventually runs off to $+\infty$. Now define the following process:
\begin{align*}
\hat X_n = 
\begin{cases}
+1 &\text{ if } X_n > 0,\\
-1 &\text{ if } X_n < 0,\\
\hat X_{n-1} &\text{ if } X_n = 0,
\end{cases}
\end{align*}
which is the teacher's best guess about the state of the world at time $n$.

Let $M_n := {\sum_{i=1}^n \mathbbm{1} (\hat X_i = +1)}$ denote the number of times that the teacher's majority guess is correct up to time $n$. In order to determine learning rates for the cumulative teaching strategy, we will need to explore the large deviations behavior of $M_n$. In particular, we are interested in the quantity $\P \left(\frac{M_n}{n} \approx 1-\delta\right)$. We expect this probability to be approximately equal to $e^{-nf(\delta)},$ for some suitable exponent $f(\delta)$; in Section~\ref{SecMn}, we will pinpoint the function $f(\delta)$ in terms of $p$ and $\delta$. 

%%%%%

\subsection{Preliminary calculations for $\{X_n\}$}

Since the random walk $\{X_n\}$ is transient, it has a positive probability of never returning to state $i$, starting from state $i$. A simple calculation shows that this probability is $1-2p$, for any value of $i$.

Next, we focus on the sojourn time $T$, defined to be the time of the \emph{first} return to 0, starting from 0. We use the convention that $T$ is positive if the random walk is positive during the sojourn; otherwise, $T$ is negative. Note that sojourn times can only take even values: $T=2k$ when the random walk takes a total of $k$ positive steps and $k$ negative steps, with only the endpoints of the sojourn being at 0. The probability that $T = 2k$ may be calculated by counting the number of such paths and multiplying the result by $p^k\bar p^k$. Furthermore, the number of these paths is the $(k-1)^{\text{st}}$ Catalan number \cite{Sta15}, $C_{k-1} = \frac{1}{k} {2k-2 \choose k-1}.$ The distribution of $T$ is therefore given by
\begin{align*}
\mathbb P(T = 2k) = 
\begin{cases}
p^{|k|}\bar p^{|k|} \frac{1}{|k|} {2|k|-2 \choose |k|-1} &\text{ if } 0< |k| < \infty,\\
0 &\text{ if } k = 0,\\
1-2p &\text{ if } k = +\infty,\\
0 &\text{ if } k = -\infty.
\end{cases}
\end{align*}

We will be interested in the random variable $T$ \emph{conditioned on} the event that $|T| < \infty$, which we call $\tilde T$. It is easy to see that $\mathbb E \tilde T = 0$, and
\begin{align*}
\mathbb E |\tilde T| &= \sum_{k=1}^\infty 2\cdot \frac{p^k\bar p^k}{2p} C_{k-1} \cdot 2k
= 2\sum_{k=0}^\infty \frac{p^{k+1}\bar p^{k+1}}{p} C_{k} \cdot (k+1).
%&= 2\bar p\sum_{k=0}^\infty (p\bar p)^k C_{k} (k+1).
\end{align*}
Recall that the generating function of the Catalan numbers is given by
\begin{align}
\label{EqnGenFunc}
f(x) = \sum_{k=0}^\infty C_k x^k = \frac{1- \sqrt{1-4x}}{2x}.
\end{align}
Thus, 
$\sum_{k=0}^\infty C_k (k+1) x^k = \frac{d}{dx} (xf(x))= \frac{1}{\sqrt{1-4x}}.$
Substituting $x = p\bar p$, we conclude that $\mathbb E |\tilde T| = \frac{2\bar p}{(\bar p-p)}$.

Another quantity that will be critical for deriving large deviations results is the log moment generating function of the random vector $(\tilde T, |\tilde T|)$, defined by
\begin{equation*}
L(\lambda_1, \lambda_2) = \log \mathbb Ee^{\lambda_1 \tilde T + \lambda_2 |\tilde T|}.
\end{equation*}
The following lemma is proved in Appendix \ref{proof: lemma: gamma}:
\begin{lemma*}
\label{lemma: gamma}
Let
\begin{equation*}
\cD = \{(x_1, x_2) \mid |x_1| + x_2 \le D(1/2 || p) \}.
\end{equation*}
For $(\lambda_1, \lambda_2) \in \cD$, we have $L(\lambda_1, \lambda_2) \leq \log \frac{1}{2p}$. For $(\lambda_1, \lambda_2) \notin \cD$, we have $L(\lambda_1, \lambda_2) = +\infty$.
\end{lemma*}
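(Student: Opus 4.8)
The plan is to compute the log moment generating function directly from the explicit distribution of $\tilde T$ derived above, and determine exactly when the defining sum converges. Recall that $\tilde T$ is $T$ conditioned on $|T| < \infty$, so for each integer $k \geq 1$ we have $\P(\tilde T = 2k) = \P(\tilde T = -2k) = \frac{1}{2p} p^k \bar p^k C_{k-1}$, since the conditioning divides the unconditional path probabilities by $\P(|T| < \infty) = 2p$, and the walk is equally likely to spend its sojourn on the positive or negative side. First I would write
\begin{align*}
\E e^{\lambda_1 \tilde T + \lambda_2 |\tilde T|}
= \sum_{k=1}^\infty \frac{1}{2p} p^k \bar p^k C_{k-1} \left( e^{2k\lambda_1 + 2k\lambda_2} + e^{-2k\lambda_1 + 2k\lambda_2} \right),
\end{align*}
grouping the positive and negative sojourns, which carry $+|T|$ in both cases but opposite signs on $T$.

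Next I would reindex via $j = k-1$ and factor out the leading term so that the sum becomes a Catalan generating function evaluated at the appropriately weighted argument. Using the identity $\sum_{j=0}^\infty C_j x^j = \frac{1 - \sqrt{1-4x}}{2x}$ from equation~\eqref{EqnGenFunc}, the two geometric-type series close in terms of $f(p\bar p\, e^{2(\lambda_1+\lambda_2)})$ and $f(p\bar p\, e^{2(\lambda_2-\lambda_1)})$. The generating function $f(x)$ is finite (the series converges) precisely when $x \le 1/4$, i.e.\ when the argument stays within the radius of convergence. Translating $p\bar p\, e^{2(\lambda_2 \pm \lambda_1)} \le \frac{1}{4}$ into a condition on $(\lambda_1, \lambda_2)$ gives $\lambda_2 + |\lambda_1| \le \frac{1}{2}\log\frac{1}{4p\bar p} = D(1/2 \,\|\, p)$, since $D(1/2 \,\|\, p) = \frac{1}{2}\log\frac{1}{4p\bar p}$ by direct expansion of the KL divergence of Bernoulli($1/2$) from Bernoulli($p$). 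This is exactly the region $\cD$, and outside it at least one series diverges, giving $L = +\infty$.

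Finally, on $\cD$ I would simplify $\E e^{\lambda_1 \tilde T + \lambda_2 |\tilde T|}$ using $f(x) = \frac{1-\sqrt{1-4x}}{2x}$ and check the claimed bound $L(\lambda_1, \lambda_2) \le \log\frac{1}{2p}$. The key observation is that on the boundary of $\cD$ the radius-of-convergence argument is saturated ($4x = 1$), so $f$ attains its maximal value $2$ there, and the resulting expression for the MGF is maximized on the boundary; a short monotonicity check in the two arguments should confirm the bound holds throughout $\cD$ with equality at suitable boundary points. I expect the main obstacle to be the careful bookkeeping that pins down the convergence threshold to the \emph{sharp} boundary $|\lambda_1| + \lambda_2 = D(1/2\,\|\,p)$ rather than a loose sufficient condition, and verifying that the supremum of the closed-form MGF over $\cD$ is precisely $\frac{1}{2p}$ rather than something larger; the symmetry $\tilde T \leftrightarrow -\tilde T$ (giving the $|\lambda_1|$ structure) and the identity $D(1/2\,\|\,p) = \frac{1}{2}\log\frac{1}{4p\bar p}$ are the two facts that make everything line up cleanly.
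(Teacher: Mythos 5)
Your proposal is correct and follows essentially the same route as the paper's proof: expand $\E e^{\lambda_1 \tilde T + \lambda_2 |\tilde T|}$ over the sojourn distribution, close the two series via the Catalan generating function $f(x) = \frac{1-\sqrt{1-4x}}{2x}$, identify divergence exactly when $p\bar p\, e^{2(\lambda_2 \pm \lambda_1)} > \tfrac14$ (equivalently $|\lambda_1| + \lambda_2 > D(1/2\,\|\,p)$), and bound the resulting closed form on $\cD$ by monotonicity in the two exponential arguments, with the maximum $\tfrac{1}{2p}$ attained at $\lambda_1 = 0$, $\lambda_2 = D(1/2\,\|\,p)$. Your explicit remark that the Catalan series still converges at the boundary value $x = \tfrac14$ is a detail the paper leaves implicit, and it is exactly what makes the bound hold on the closed set $\cD$.
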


Finally, the last ingredient we need is the number of returns of the random walk $\{X_n\}$ to 0. This is a geometric random variable, with distribution given by
\begin{align*}
\mathbb P(G = i) = (2p)^{i}(1-2p), \quad i \geq 0.
\end{align*}

%%%%%

\subsection{Large deviations properties of $M_n$}
\label{SecMn}

Let $B$ be the random variable indicating the time of the final visit of $\{X_n\}$ to state $0$. We break up the probability of interest as follows:
\begin{equation}
\label{EqnMn}
\mathbb P\left(\frac{M_n}{n} \leq 1-\delta\right)%&= \mathbb P(M_n \leq n(1-\delta), B \leq n) + \mathbb P(M_n \leq n(1-\delta), B > n)\\
= \left(\sum_{g = 0}^{n/2} \mathbb P(M_n \leq n(1-\delta), B \leq n, G = g)\right) + \mathbb P(M_n \leq n(1-\delta), B>n).
\end{equation}
Note that the sum contains fewer than $n$ terms (the maximum number of returns to 0 in $n$ steps is $\frac{n}{2}$), and the largest among these terms will dictate the exponential growth rate of the sum.

We have the following lemma, proved in Appendix~\ref{AppLemMnSecond}:
\begin{lemma*}
\label{LemMnSecond}
We have
\begin{equation*}
\mathbb P(M_n \leq n(1-\delta), B>n) \approx e^{-n(D(1/2 || p)}.
\end{equation*}
\end{lemma*}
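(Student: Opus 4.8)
The event $\{B > n\}$ requires that the random walk $\{X_n\}$ has not yet made its final visit to state $0$ by time $n$; equivalently, the walk returns to $0$ at some time strictly after $n$. The plan is to show that this event alone already forces the exponential rate $D(1/2 \,||\, p)$, and that conditioning further on $M_n \le n(1-\delta)$ does not change the exponent, since the event $\{B > n\}$ essentially pins down the dominant contribution.

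First I would establish the upper bound. The event $\{B > n\}$ is contained in the event that $X_n$, after returning to $0$ for the last time before or at $n$, is forced to make an additional excursion crossing time $n$; more cleanly, $\{B > n\} \subseteq \{X_k = 0 \text{ for some } k > n\}$, which in particular requires a sojourn excursion of length exceeding $n - B_0$ for the preceding return time $B_0$. The cheapest way for the walk to still be at or near $0$ at time $n$ is to spend roughly half its steps going up and half going down, i.e.\ to behave like an unbiased walk for $n$ steps. The probability of any fixed such near-balanced trajectory is on the order of $p^{n/2}\bar p^{n/2}$, and the number of balanced paths contributes only a polynomial (sub-exponential) Catalan-type factor, so $\mathbb P(B > n) \lesssim e^{-n D(1/2 \,||\, p)}$ using $D(1/2\,||\,p) = \frac{1}{2}\log\frac{1}{4p\bar p} = -\log(2\sqrt{p\bar p})$. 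Intersecting with $\{M_n \le n(1-\delta)\}$ only shrinks the probability, so the upper bound on the full event follows immediately.

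For the matching lower bound I would exhibit a single family of trajectories realizing the rate. Consider paths that make $k$ up-steps and $k$ down-steps (with $2k \le n$ and then running off with a long excursion) arranged so that the walk stays nonnegative and returns to $0$ for the last time after $n$; the sojourn-time distribution computed above shows a single excursion of length $\ge n$ already has probability $\approx e^{-n D(1/2\,||\,p)}$ via the Catalan count and the factor $p^{k}\bar p^{k}$ with $k \approx n/2$. Crucially, along such balanced excursions the walk is positive for roughly half the time, so $M_n \approx n/2 \le n(1-\delta)$ holds automatically for any $\delta \le 1/2$, meaning the constraint $M_n \le n(1-\delta)$ is compatible with (indeed implied by) the dominant excursion geometry. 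This gives $\mathbb P(M_n \le n(1-\delta), B > n) \gtrsim e^{-n D(1/2\,||\,p)}$.

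The main obstacle I anticipate is the careful bookkeeping in the lower bound: I must verify that there is a positive-exponent-cost-free way to simultaneously satisfy $B > n$ and $M_n \le n(1-\delta)$, i.e.\ that the single long balanced excursion can be oriented (positive versus negative sojourn) and truncated so that the majority-correct count $M_n$ genuinely falls below $n(1-\delta)$ without paying more than the $D(1/2\,||\,p)$ rate. I would handle this by choosing the excursion to be \emph{negative} (the walk stays below $0$ through time $n$), forcing $\hat X_i = -1$ and hence $M_n = 0$, which trivially satisfies $M_n \le n(1-\delta)$; the probability of such a negative excursion of length exceeding $n$ is symmetric to the positive case up to the bias factor and still decays at rate $D(1/2\,||\,p)$. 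The remaining routine step is to confirm that the bias asymmetry between $p$ and $\bar p$ contributes exactly the divergence $D(1/2\,||\,p)$ and no more, which follows from the generating-function identity \eqref{EqnGenFunc} and the asymptotics of the Catalan numbers.
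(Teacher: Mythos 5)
Your overall architecture is sound, and your lower bound is genuinely different from (and in one respect cleaner than) the paper's. The paper factors the probability as $\mathbb{P}(B>n)\,\mathbb{P}(M_n \le n(1-\delta)\mid B>n)$, uses a mirror-symmetry argument on paths conditioned on $\{B>n\}$ to show the conditional factor lies in $[1/2,1]$, and then estimates $\mathbb{P}(B>n)$ by summing $\mathbb{P}(B=2i)=\binom{2i}{i}p^i\bar p^i(1-2p)$ over $i \ge n/2$. You instead handle the two constraints simultaneously in the lower bound by forcing a single \emph{negative} excursion of length exceeding $n$, which gives $M_n=0$ for free; this avoids the symmetry argument entirely and works for every $\delta \in [0,1]$ (the paper's symmetry step, as written, uses $1-\delta \ge 1/2$). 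That construction, with probability $C_{k-1}(p\bar p)^k$ for $2k>n$, indeed achieves the rate $D(1/2\|p)$ and is essentially the same device the paper uses in its proof of the other lower-bound lemma (Lemma~\ref{lemma: lobo}).

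There is, however, a genuine error in your upper bound: the claim that "the number of balanced paths contributes only a polynomial (sub-exponential) Catalan-type factor" is false. The number of balanced paths of length $n$ is $\binom{n}{n/2}\sim 2^n\sqrt{2/(\pi n)}$, and Catalan numbers satisfy $C_k \sim 4^k k^{-3/2}/\sqrt{\pi}$; both are exponentially large. This exponential count is precisely what turns the per-path probability $(p\bar p)^{n/2} = e^{-n(D(1/2\|p)+\log 2)}$ into the correct rate $e^{-nD(1/2\|p)}$, since $D(1/2\|p)=\tfrac12\log\tfrac{1}{4p\bar p}$. Taken literally, your premises would give $\mathbb{P}(B>n) \lesssim 2^{-n}e^{-nD(1/2\|p)}\,\mathrm{poly}(n)$, which contradicts your own lower bound—whose validity depends on $C_{k-1}$ being of order $4^k/\mathrm{poly}(k)$—so the write-up is internally inconsistent. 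The fix is the standard computation the paper carries out: from $\{B>n\}\subseteq\{X_{2i}=0 \text{ for some } 2i>n\}$, a union bound gives $\mathbb{P}(B>n) \le \sum_{i > n/2}\binom{2i}{i}(p\bar p)^i \le \sum_{i > n/2}(4p\bar p)^i = O\bigl((4p\bar p)^{n/2}\bigr) = O\bigl(e^{-nD(1/2\|p)}\bigr)$, where the binomial coefficient is bounded by $2^{2i}$ rather than by a polynomial. With that repair, your proof is complete.
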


We now turn to the initial $\frac{n}{2}$ terms in equation~\eqref{EqnMn}. We rewrite the probability as follows:
\begin{multline*}
\sum_{g = 0}^{n/2} \mathbb P(M_n \leq n(1-\delta), B \leq n, G = g) \\
= \sum_{g = 0}^{n/2} \P(G=g)\times
\left\{\sum_{b=2g}^n \sum_{a= -b}^b \P \left( \sum_{j=1}^{g} \tilde T_j = -a, \sum_{j=1}^{g} |\tilde T_j| = b  \right) \mathbbm{1}\left( \frac{a+b}{2} \ge  n\delta \right) \right\}.
\end{multline*}
This is because conditioned on $G = g$, the behavior of the random walk can be broken into segments in between return times to 0, corresponding to the sojourn times $\{\tilde T_1, \dots, \tilde T_g\}$. Furthermore, the time $B = b$ of the final sojourn time must be at least $2g$ and at most $n$, which is the time of the last return to 0; and the sum $-a$ of the signed sojourn times is then clearly in $[-b,b]$. Finally, the quantity $M_n$ is a sum of $(n-b)$ (corresponding to the final $n-b$ time steps) and the sum of lengths of positive sojourn segments, which is $\frac{b-a}{2}$. Then the inequality $M_n \le n(1-\delta)$ is equivalent to $\frac{a+b}{2} \ge n\delta$.

%This is because the total number of $+1$'s received equals $(n-b)$ (the $+1$'s received after time $b$) plus the total number of $+1$'s received up to time $b$, which equals $(b-a)/2$. This equals $n- (a+b)/2$, and since we are interested in the event that this quantity is at most $n(1-\delta)$, we introduce the indicator $\mathbbm{1}((a+b)/2 > n\delta)$.

We now substitute $\alpha \defn \frac{a}{n}, \beta \defn \frac{b}{n}$, and $\gamma \defn \frac{g}{n}$. We may rewrite the above expression as a summation over $\alpha, \beta$, and $\gamma$, where we implicitly assume that these variables take values of the form $\frac{i}{n}$, for some integer $i$:
\begin{equation*}
\sum_{\gamma = 0}^{1/2} \P\left(\frac{G}{n} = \gamma \right) \times \left\{\sum_{\beta=\max(\delta, 2\gamma)}^{1} \sum_{\alpha= 2\delta-\beta}^{\beta} \P \left( \frac{\sum_{j=1}^{n\gamma} \tilde T_j}{n} = -\alpha, \frac{\sum_{j=1}^{n\gamma} |\tilde T_j| }{n}= \beta  \right)\right\}.
\end{equation*}

Our next theorem is a core technical result of this paper:
\begin{theorem}
\label{thm: magic}
Let $\delta \in [0,1]$. The following equality holds:
\begin{align*}
\lim_{n \to \infty} \frac{1}{n} \log \P \left(\frac{M_n}{n} \leq (1-\delta)\right) = -\delta D(1/2|| p).
\end{align*}
\end{theorem}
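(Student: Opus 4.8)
The plan is to establish matching exponential upper and lower bounds, both equal to $\delta D(1/2 || p)$, working from the decomposition~\eqref{EqnMn} and its sojourn refinement, and freely using Lemma~\ref{lemma: gamma} (the bound on $L$) and Lemma~\ref{LemMnSecond} (the $B>n$ term). The target rate should arise from the single simplest ``bad'' event: one negative sojourn of length $n\delta$ near the start, after which the transient walk escapes to $+\infty$ and the teacher is correct for all remaining time. This is the $g=1$ term, so I expect the dominant contribution to come from a \emph{single} sojourn rather than many; the whole content of the theorem is that no configuration with more returns does better.

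For the upper bound I would avoid invoking Cram\'er's theorem directly (the relevant number of sojourns $g$ ranges down to $O(1)$, so the empirical fraction $\gamma=g/n$ vanishes and the i.i.d.\ large-deviation scaling is delicate there) and instead use a bare Chernoff bound that is uniform in $g$. Writing $W_j := |\tilde T_j| - \tilde T_j \ge 0$, so that $a+b = \sum_{j=1}^g W_j$, each inner sum is at most $\P(\sum_{j=1}^g W_j \ge 2n\delta)$, and for $0 \le s \le D(1/2||p)/2$,
\[
\P\!\left(\sum_{j=1}^g W_j \ge 2n\delta\right) \le e^{-2n\delta s}\left(\E e^{sW}\right)^g = e^{-2n\delta s}\, e^{g L(-s,\,s)},
\]
since $\E e^{sW} = \E e^{-s\tilde T + s|\tilde T|} = e^{L(-s,s)}$ and $(-s,s)\in\cD$ precisely when $s \le D(1/2||p)/2$. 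The key step is the cancellation: Lemma~\ref{lemma: gamma} gives $L(-s,s) \le \log\frac{1}{2p}$, while $\P(G=g) = (2p)^g(1-2p)$, so $\P(G=g)\,e^{g L(-s,s)} \le (1-2p)$ \emph{independently of} $g$. Taking $s = D(1/2||p)/2$ leaves the per-$g$ bound $(1-2p)\,e^{-n\delta D(1/2||p)}$; summing the at most $n/2+1$ terms costs only a polynomial factor, and the residual $B>n$ term is $\approx e^{-nD(1/2||p)} \le e^{-n\delta D(1/2||p)}$ by Lemma~\ref{LemMnSecond}. Hence $\limsup_n \frac1n\log\P(M_n/n\le 1-\delta) \le -\delta D(1/2||p)$.

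For the matching lower bound I would lower-bound the probability by the single-sojourn scenario: the walk performs one negative sojourn of length at least $n\delta$ and then never returns to $0$. Using the sojourn law $\P(T = -2k) = p^k\bar p^k C_{k-1}$ with $2k = \lceil n\delta\rceil$, the Catalan asymptotics $C_{k-1} \sim 4^{k}/\mathrm{poly}(k)$ give $\P(T=-2k) \approx (4p\bar p)^{n\delta/2} = e^{-n\delta D(1/2||p)}$, and the non-return factor $1-2p$ is an $n$-independent constant. Since the teacher is wrong on (essentially all of) these $\ge n\delta$ steps, this event forces $M_n \le n(1-\delta)$, yielding $\liminf_n \frac1n\log\P \ge -\delta D(1/2||p)$. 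The main obstacle is the upper bound, specifically ruling out that many coordinated sojourns beat the one-jump rate; the resolution is the exact arithmetic cancellation $2p = e^{-\log(1/2p)}$ against the Chernoff factor $e^{gL} \le (1/2p)^g$, which renders the per-$g$ estimate uniform and removes any need to optimize over $\gamma$.
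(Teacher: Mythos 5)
Your proposal is correct, but your upper bound takes a genuinely different route from the paper's. Your lower bound---one negative sojourn of length $\approx n\delta$ followed by escape to $+\infty$, with rate $e^{-n\delta D(1/2\|p)}$ extracted from Catalan asymptotics---is exactly the paper's Lemma~\ref{lemma: lobo}, modulo the parity rounding that the paper handles with $\tilde\delta_n$. For the upper bound, the paper applies the G\"{a}rtner--Ellis theorem to the vector $Z_n = \left(G/n, \sum_j \tilde T_j/n, \sum_j |\tilde T_j|/n\right)$: it computes the degenerate limit $\Lambda$ taking only the values $0$ and $+\infty$, forms the conjugate $\Lambda^*$, and shows by a monotonicity argument that $\inf_{\cQ} \Lambda^*$ is attained at $\gamma = 0$, where $\Lambda^*(0,\alpha,\beta) = \beta D(1/2\|p)$, giving the value $\delta D(1/2\|p)$. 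You instead run a bare Chernoff bound on $\sum_{j=1}^g W_j = a+b$ with $W_j = |\tilde T_j| - \tilde T_j$, and the cancellation $\P(G=g)\, e^{g L(-s,s)} \le (2p)^g (1-2p)(1/2p)^g = 1-2p$ makes the estimate uniform in $g$; taking $s = D(1/2\|p)/2$ (the point where $(-s,s)$ sits on the boundary of $\cD$, where Lemma~\ref{lemma: gamma} still gives $L \le \log(1/2p)$ and the moment generating function is indeed finite) and absorbing the $B>n$ term via Lemma~\ref{LemMnSecond} (using $\delta \le 1$) yields $\P(M_n/n \le 1-\delta) \le \mathrm{poly}(n)\, e^{-n\delta D(1/2\|p)}$. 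Your cancellation is the same underlying fact as the paper's observation that the coefficient $\log(1/2p) - L(\lambda_2,\lambda_3)$ of $\gamma$ in $\Lambda^*$ is nonnegative, but deployed directly rather than through convex duality. What your route buys: it is more elementary, produces a clean non-asymptotic bound, and---as you note---avoids exactly the regime $g = O(1)$, i.e.\ $\gamma \to 0$, where the scaling behind a G\"{a}rtner--Ellis application is most delicate (indeed, the paper itself only uses G\"{a}rtner--Ellis for the upper bound and must prove the lower bound by hand, just as you do). What the paper's route buys: the full rate-function machinery, which the authors remark extends to non-binary random walks and to rate functions for occupation times of intervals, whereas your scalar Chernoff argument is tailored to this particular one-sided event.
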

\begin{proof}
Define the random vector $Z_n \defn \left(\frac{G}{n},  \frac{\sum_{j=1}^{G} \tilde T_j}{n},  \frac{\sum_{j=1}^{G} |\tilde T_j|}{n}\right)$. Define the set
\begin{equation*}
\cQ := \Big\{\alpha, \beta, \gamma \mid \gamma \in [0,1/2], \beta \in [\max(\delta, 2\gamma), 1], \alpha \in [2\delta-\beta, \beta]\Big\} \subseteq \real^3.
\end{equation*}
Note that we are interested in the quantity $\P(Z_n \in \cQ)$. We will evaluate this probability for large $n$ using the G\"{a}rtner-Ellis theorem from large deviation theory \cite{DupEll11}. The first step is to show that the following limit exists for every $\lambda \defn (\lambda_1, \lambda_2, \lambda_3) \in \mathbb R^3$:
\begin{align*}
&\Lambda(\lambda) \defn \lim_{n\to \infty} \frac{1}{n} \log \E e^{n\lambda \cdot Z_n}\\
 &= \lim_{n \to \infty} \frac{1}{n} \log \E \exp \left(\lambda_1G + \lambda_2 \left( \sum_{j=1}^{G} \tilde T_j\right) + \lambda_3 \left(\sum_{j=1}^{G} |\tilde T_j|\right) \right)\\
&= \lim_{n \to \infty} \frac{1}{n} \log \E \left[ \E \left[ \exp \left(\lambda_1G + \lambda_2 \left( \sum_{j=1}^{G} \tilde T_j\right) + \lambda_3 \left(\sum_{j=1}^{G} |\tilde T_j|\right) \right) \Big| G \right] \right]\\
&\stackrel{(a)}= \lim_{n \to \infty} \frac{1}{n} \log \E \left[ e^{\lambda_1G + L(\lambda_2, \lambda_3)G}\right]\\
&\stackrel{(b)}{=} 
\begin{cases}
0 &\text{ if } \lambda_1 + L(\lambda_2, \lambda_3) < \log\left(\frac{1}{2p}\right),\\
+\infty &\text{ otherwise},
\end{cases}
\end{align*}
where equality $(a)$ holds for $(\lambda_2, \lambda_3) \in \cD$, using the fact that the $\tilde T_j$'s are conditionally i.i.d.\ (and the limit is $+\infty$ if $(\lambda_2, \lambda_3) \notin \cD$); and equality $(b)$ holds by evaluating the moment generating function of a geometric random variable. To summarize, we have
\begin{align*}
\Lambda(\lambda) = 
\begin{cases}
+\infty &\text{ if } (\lambda_2, \lambda_3) \in \cD^c,\\
+\infty &\text{ if } (\lambda_2, \lambda_3) \in \cD \text{ and } \lambda_1 \geq \log\left(\frac{1}{2p}\right) - L(\lambda_2, \lambda_3),\\
0 &\text{ otherwise.} 
\end{cases}
\end{align*}
Let the domain of $\Lambda$ be $\cD_\Lambda$, and let $\Lambda^*$ denote the convex conjugate of $\Lambda$. A direct application of the G\"{a}rtner-Ellis theorem then gives the following upper bound:
\begin{align}
\label{eq: upper}
\limsup_{n\to \infty} \frac{1}{n} \log \P(Z_n \in \cQ) \leq - \inf_{z \in \cQ} \Lambda^*(z).
%&\liminf_{n \to \infty} \frac{1}{n} \log \P_n(\cR^{\mathrm{o}}) \geq -\inf_{\cR^{\mathrm{o}} \cap \cE} \Lambda^*(z)
\end{align}
%where $\P_n$ is the distribution of $Z_n$.
We now evaluate the convex conjugate of $\Lambda$ at the point $(\gamma, \alpha, \beta) \in \real_+^3$:
\begin{align*}
\Lambda^*(\gamma, \alpha, \beta) & = \sup_{\lambda \in \cD_\Lambda} \left\{ \lambda_1 \gamma + \lambda_2 \alpha + \lambda_3 \beta\right\} \\
&\stackrel{(a)}= \sup_{(\lambda_2, \lambda_3) \in \cD} \left\{\left( \log\left(\frac{1}{2p}\right) - L(\lambda_2, \lambda_3)\right)\gamma + \lambda_2 \alpha + \lambda_3 \beta\right\},
\end{align*}
where in equality $(a)$, we have used the fact that $\gamma \geq 0$ in $\cQ$.  We now make the following crucial observations:  First, Lemma~\ref{lemma: gamma} implies that the coefficient of $\gamma$ in the above expression is positive, so $\Lambda^*(\gamma, \alpha, \beta)$ is a monotonically increasing function of $\gamma$ for fixed $\alpha$ and $\beta$. Second, the set $\cQ$ is such that the possible values of the pair $(\alpha, \beta)$ can only increase as $\gamma$ becomes smaller. This implies that if $\gamma_1 < \gamma_2$, then 

\begin{align*}
\inf_{(\alpha, \beta) : (\gamma_1, \alpha, \beta) \in \cQ} \Lambda^*(\gamma_1, \alpha, \beta)  < \inf_{(\alpha, \beta) : (\gamma_2, \alpha, \beta) \in \cQ} \Lambda^*(\gamma_2, \alpha, \beta) ,
\end{align*}
since not only is the left-hand objective smaller than the right-hand objective for every fixed $(\alpha,\beta)$, but also the range of possible values of $(\alpha, \beta)$ on the left-hand side contains the range of values on the right-hand side.
Thus, the infimum of $\Lambda^*$ over $\cQ$ must occur when $\gamma = 0$; i.e.,
\begin{align*}
\inf_{(\gamma, \alpha, \beta) \in \cQ} \Lambda^*(\gamma, \alpha, \beta) &= \inf_{\beta \in [\delta, 1], \alpha \in [2\delta-\beta, \beta]} \Lambda^*(0, \alpha, \beta).
\end{align*}
When $\beta \in [\delta, 1]$ and $\alpha \in [\beta-2\delta, \beta]$, we have
\begin{align*}
\Lambda^*(0, \alpha, \beta) &= \sup_{(\lambda_2, \lambda_3) \in \cD} \left\{\lambda_2 \alpha+ \lambda_3 \beta\right\}
= \beta D(1/2 || p).
\end{align*}
Hence, we conclude that
\begin{align*}
\inf_{(\gamma, \alpha, \beta) \in \cQ} \Lambda^*(\gamma, \alpha, \beta) &= \inf_{\beta \in [\delta, 1], \alpha \in [2\delta-\beta, \beta]} \beta D(1/2 || p)\\
&= \delta D(1/2 || p).
\end{align*}

To complete the proof, we need to establish a lower bound counterpart to inequality \eqref{eq: upper}. This is established via the following lemma, proved in Appendix \ref{proof: lemma: lobo}:
\begin{lemma*}
\label{lemma: lobo}
The following inequality holds:
\begin{align*}
\liminf_{n \to \infty} \frac{1}{n} \log \P(Z_n \in \cQ) \geq -\delta D(1/2 || p).
\end{align*}
\end{lemma*}
The proof follows by constructing a set of paths that satisfy $M_n < (1-\delta)n$ and explicitly computing the combined probability of these paths.
This completes the proof of Theorem \ref{thm: magic}.
%and we conclude $
%\P \left(\frac{M_n}{n} \leq (1-\delta)\right) \approx e^{-n\delta D(1/2 || p)}.$
\end{proof}

%We find it remarkable that the calculations above can be performed to completion, and that the final expression has such a simple form.

%%%%%

\subsection{Large deviations for a Bernoulli mixture}

Finally, we first prove the following large deviations result for a mixture of Bernoulli random variables. Recall~\cite{DupEll11} that the rate function $I$ of a sequence of random variables $\{W_n\}$ has the property that for any closed subset $F \subseteq \real$ and any open subset $G \subseteq \real$, we have
\begin{align*}
\lim\sup_{n \rightarrow \infty} \frac{1}{n} \log \mprob(W_n \in F) & \le - \inf_{w \in F} I(w), \\
\lim\inf_{n \rightarrow \infty} \frac{1}{n} \log \mprob(W_n \in G) & \ge - \inf_{w \in G} I(w).
\end{align*}

\begin{lemma*}
\label{lemma: cricket}
Let $\theta \in [0,1]$ and $q \in [0,1/2]$. Consider a sequence of i.i.d.\ Bernoulli($1-q$) random variables $\{U_i\}_{1 \leq i \leq n- \lfloor n\theta \rfloor}$ and a sequence of i.i.d.\ Bernoulli($q$) random variables $\{V_j\}_{1 \leq j \leq \lfloor n\theta \rfloor}$, such that the $U_i$'s are independent of the $V_j$'s. The random variable
\begin{align*}
W_n \defn \frac{\sum_{i=1}^{n - \lfloor n\theta \rfloor} U_i + \sum_{j=1}^{\lfloor n\theta \rfloor} V_j}{n}
\end{align*}
satisfies the large deviation principle with rate function 
\begin{align*}
&I_\theta(w) = w\log \left(  \eta \right)
- \bar \theta \log \left( \bar q \eta + q\right)
- \theta \log \left(q\eta + \bar q\right),
\end{align*}
where
\begin{align*}
\eta := \frac{-\tau + \sqrt{\tau^2+4w\bar w}}{2\bar w}, \quad \text{and} \quad \tau \defn  \frac{\bar q}{q}(\bar \theta - w) + \frac{q}{\bar q}(\theta - w).
\end{align*}
\end{lemma*}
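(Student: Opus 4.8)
The plan is to apply the G\"{a}rtner--Ellis theorem, exactly as in the proof of Theorem~\ref{thm: magic}, but the computation here is much more direct because $W_n$ is a normalized sum of independent (though not identically distributed) bounded random variables, so its scaled cumulant generating function is elementary. First I would verify that the limiting logarithmic moment generating function exists and is finite. Since $nW_n = \sum_i U_i + \sum_j V_j$ and the two families are independent, the moment generating function factorizes, and using the identity $\E e^{\lambda X} = \bar a + a e^{\lambda}$ for $X \sim \mathrm{Bernoulli}(a)$ (so that $\E e^{\lambda U} = q + \bar q e^{\lambda}$ and $\E e^{\lambda V} = \bar q + q e^{\lambda}$), I would obtain
\[
\frac{1}{n}\log \E e^{\lambda n W_n}
= \frac{n-\lfloor n\theta\rfloor}{n}\log\!\left(q + \bar q e^{\lambda}\right)
+ \frac{\lfloor n\theta\rfloor}{n}\log\!\left(\bar q + q e^{\lambda}\right).
\]
Letting $n \to \infty$ and using $\lfloor n\theta\rfloor/n \to \theta$ gives
\[
\Lambda(\lambda) \defn \lim_{n\to\infty}\frac{1}{n}\log \E e^{\lambda n W_n}
= \bar\theta\log\!\left(q + \bar q e^{\lambda}\right) + \theta\log\!\left(\bar q + q e^{\lambda}\right).
\]

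This function is finite and differentiable for every $\lambda \in \real$, so $\Lambda$ is essentially smooth with effective domain all of $\real$; the G\"{a}rtner--Ellis theorem then yields the full large deviation principle, with rate function equal to the convex conjugate $I_\theta(w) = \sup_{\lambda \in \real}\{\lambda w - \Lambda(\lambda)\}$. It remains only to evaluate this conjugate in closed form and match it to the stated expression.

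For $w \in (0,1)$ I would carry out the supremum by differentiating in $\lambda$ and setting $\Lambda'(\lambda) = w$. Writing $\eta \defn e^{\lambda}$, the stationarity condition reads
\[
w = \bar\theta\,\frac{\bar q\,\eta}{q + \bar q\,\eta} + \theta\,\frac{q\,\eta}{\bar q + q\,\eta}.
\]
Clearing the two denominators and collecting powers of $\eta$ collapses this rational equation to the single quadratic
\[
\bar w\,\eta^{2} + \tau\,\eta - w = 0,
\]
with $\tau = \tfrac{\bar q}{q}(\bar\theta - w) + \tfrac{q}{\bar q}(\theta - w)$ exactly as defined in the statement. Since $\bar w > 0$ and the constant term $-w < 0$, the quadratic has a unique positive root, which is precisely the displayed formula for $\eta$; substituting $\lambda = \log\eta$ back into $\lambda w - \Lambda(\lambda)$ then reproduces the claimed expression $I_\theta(w) = w\log\eta - \bar\theta\log(\bar q \eta + q) - \theta\log(q\eta + \bar q)$.

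The hard part will be the algebra in this last step: confirming that the rational stationarity equation genuinely collapses to a single quadratic whose linear coefficient is exactly the stated $\tau$ (this is where one must be careful not to mis-handle the $q \leftrightarrow \bar q$ swap between the two Bernoulli families), and checking that back-substitution yields exactly $I_\theta$ rather than an equivalent but messier form. Everything else is routine. To finish cleanly I would also record separately that $I_\theta(w) = +\infty$ for $w \notin [0,1]$, and handle the boundary values $w \in \{0,1\}$ by letting $\lambda \to \mp\infty$, so that the formula extends by monotonicity of $\Lambda'$ to the full achievable range of $w$.
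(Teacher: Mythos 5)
Your proposal is correct and takes essentially the same route as the paper's own proof: compute the limiting scaled cumulant generating function $\Lambda(\lambda) = \bar\theta\log(\bar q e^\lambda + q) + \theta\log(q e^\lambda + \bar q)$ via independence of the two Bernoulli families, invoke the G\"{a}rtner--Ellis theorem, and evaluate the convex conjugate by reducing the stationarity condition to the quadratic $\bar w\,\eta^2 + \tau\,\eta - w = 0$ in $\eta = e^\lambda$, whose unique positive root is the stated $\eta$. Your extra care about essential smoothness of $\Lambda$ and the boundary cases $w \in \{0,1\}$ is a minor refinement the paper leaves implicit, but the argument is identical in substance.
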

The proof is a direct application of the G\"{a}rtner-Ellis theorem and is detailed in Appendix \ref{proof: lemma: cricket}. Lemma~\ref{lemma: cricket} will be useful for evaluating the probability that the student makes an error when we condition on various aspects of the random walk $\{X_n\}$, such as the number $M_n$ of +1 transmissions by the teacher, or the last return time of the random walk to 0.

%%%%%

\section{Majority learning}
\label{SecMajority}

We now present our main results in the case where the student employs the simplest majority rule strategy.
%Assume, without loss of generality, that the state of the world is $\Theta = +1$.

\subsection{Simple forwarding}

If the teacher employs the simple forwarding strategy, the student's observations $\{Z_n\}$ may be viewed as noisy observations of $\Theta$ through a binary symmetric channel with error parameter
\begin{equation*}
p\star q := p(1-q) + q(1-p) = p\bar q + q \bar p.
\end{equation*}
In this case, the majority learning strategy for the student exactly corresponds to the Bayesian strategy, producing an optimal learning rate of $D(1/2 || p\star q)$~\cite{DupEll11}. We will use this simple expression as a benchmark for comparing the learning rate of more complicated combinations of teacher/student strategies.

%%%%%

\subsection{Cumulative teaching}
\label{section: analysis}

Note that Theorem \ref{thm: magic} already provides the exact learning rate if $q=0$, since the student makes an error exactly when $M_n < \frac{n}{2}$. Substituting $\delta = 1/2$, we see that the student will learn at a rate of $\frac{1}{2} D(1/2 || p)$ via the cumulative teaching strategy. In contrast, the learning rate for simple forwarding is $D(1/2 || p)$, since $p \star q = p$---and this rate is \emph{higher} than that of cumulative teaching! It is natural to wonder what values of the channel parameters $(p,q)$ make simple forwarding a better strategy than cumulative teaching, and vice versa.

%Indeed, conditioned on the fact that the last return time is $B = \lfloor n \theta\rfloor$, we know that the teacher must transmit only +1's for the remaining $n-\lfloor n \theta \rfloor$ time instances.

To analyze the general case $q > 0$, we can use Lemma~\ref{lemma: cricket} to evaluate the probability that at most $n(1-\delta)$ instances of the student's received sequence $\{Z_n\}$ are equal to $+1$. The learning rate of a student using a majority learning rule can then be obtained by plugging in $\delta = \frac{1}{2}$. 

\begin{theorem}
\label{ThmCricket}
Let $\delta \in [q, 1-q]$. Suppose we say that the student commits an error if the fraction of received $+1$'s is at most $(1-\delta)$. Then the rate of learning is given by 
\begin{align*}
\cR = \inf_{\theta \in [0,1]} \left\{ \theta D(1/2 || p) + \hat I(\theta)\right\},
\end{align*}
where
\begin{align*}
\hat I(\theta) = \inf_{w \in [0, 1-\delta]} I_\theta(w),
\end{align*}
and $I_\theta(\cdot)$ is the rate function appearing in Lemma~\ref{lemma: cricket}.
\end{theorem}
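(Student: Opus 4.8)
The plan is to condition on the teacher's cumulative performance and then invoke the two technical results already in hand: Theorem~\ref{thm: magic} to control the teacher's behavior and Lemma~\ref{lemma: cricket} to control the student's decoding error given that behavior. Since $\Theta = +1$, the teacher's guess $\hat X_i$ is correct exactly when $\hat X_i = +1$, so $M_n$ counts the correct guesses. I would parametrize the teacher's performance by the fraction $\theta$ of \emph{incorrect} guesses, i.e.\ $M_n/n \approx 1-\theta$, and aggregate over $\theta$ at the end.

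First I would set up the conditional decomposition. The teacher-to-student channel is a BSC($q$) that is independent of the random walk $\{X_n\}$, hence of $\{\hat X_i\}$ and of $M_n$. Conditioned on $\{\hat X_i\}$, the received bits $\{Z_i\}$ are therefore independent, with $\P(Z_i = +1) = 1-q$ at the $M_n$ positions where $\hat X_i = +1$ and $\P(Z_i = +1) = q$ at the remaining $n - M_n$ positions. Writing $W_n$ for the fraction of received $+1$'s, the conditional law of $W_n$ given $M_n/n \approx 1-\theta$ is then exactly the Bernoulli mixture of Lemma~\ref{lemma: cricket} with mixing fraction $\theta$. Since the student errs precisely when $W_n \le 1-\delta$, applying the large deviation principle of Lemma~\ref{lemma: cricket} to the closed set $[0,1-\delta]$ yields
\[
\P\!\left(W_n \le 1-\delta \,\Big|\, \tfrac{M_n}{n}\approx 1-\theta\right) \approx e^{-n\hat I(\theta)}, \qquad \hat I(\theta) = \inf_{w\in[0,1-\delta]} I_\theta(w).
\]

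Next I would extract the pointwise version of Theorem~\ref{thm: magic}. That theorem gives $\mprob(M_n/n \le 1-\theta)\approx e^{-n\theta D(1/2||p)}$. Because the exponent $\theta D(1/2||p)$ is increasing in $\theta$, the rate function governing $M_n/n$ is minimized at the right endpoint of $[0,1-\theta]$, so the cumulative estimate is dominated by its boundary and upgrades to $\mprob(M_n/n \approx 1-\theta)\approx e^{-n\theta D(1/2||p)}$. I would then combine the two pieces by the law of total probability, splitting the error event according to the discretized value of $\theta$:
\[
\P(\text{student errs}) = \sum_\theta \mprob\!\left(\tfrac{M_n}{n}\approx 1-\theta\right)\P\!\left(W_n \le 1-\delta \,\Big|\, \tfrac{M_n}{n}\approx 1-\theta\right) \approx \sum_\theta e^{-n(\theta D(1/2||p) + \hat I(\theta))}.
\]
As there are only $O(n)$ terms, on the exponential scale the sum equals its largest term, giving $\cR = \inf_{\theta\in[0,1]}\{\theta D(1/2||p) + \hat I(\theta)\}$; the matching lower bound follows by restricting to the single best $\theta$.

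I expect the main obstacle to lie in the rigorous justification of the two exponential-combination steps rather than in any single computation. Passing from the cumulative bound of Theorem~\ref{thm: magic} to the pointwise rate $\theta D(1/2||p)$, and then interchanging the sum over $\theta$ with the $\limsup$ defining $\cR$, both demand uniformity: the conditional rate in Lemma~\ref{lemma: cricket} must be controlled uniformly as $\theta$ ranges over a fine grid, and the discretization error in $\theta$ must vanish in the exponent. Verifying the role of the hypothesis $\delta\in[q,1-q]$---which ensures the student's error is genuinely a large-deviation event, since at $\theta = 0$ the typical value of $W_n$ is $1-q \ge 1-\delta$---is the remaining detail to confirm.
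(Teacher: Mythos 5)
Your skeleton is the same as the paper's: condition on the teacher's count $M_n$, use Theorem~\ref{thm: magic} for the marginal of $M_n$ and Lemma~\ref{lemma: cricket} for the student's conditional error (with mixing fraction $\theta$ equal to the fraction of incorrect transmissions), and combine by the largest-exponent principle. Those identifications are all correct, as is your observation about the role of $\delta \in [q,1-q]$.

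The gap is in the step you flag but do not resolve: upgrading the cumulative estimate of Theorem~\ref{thm: magic} to a pointwise one at resolution $1/n$. Your justification---the exponent is increasing in $\theta$, so the cumulative probability is ``dominated by its boundary''---only works when adjacent grid points have exponents separated by an amount growing linearly in $n$. If you difference at neighboring values of $M_n$ (spacing $1/n$), the exponents $n\theta D(1/2\|p)$ and $n(\theta+\tfrac{1}{n})D(1/2\|p)$ differ by the \emph{constant} $D(1/2\|p)$, while Theorem~\ref{thm: magic} only pins each probability down to a factor $e^{o(n)}$; the difference of the two cumulative probabilities can therefore be exponentially smaller than either, so no pointwise lower bound follows (the upper bound is fine, since $\P(M_n=k)\le\P(M_n\le k)$). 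Relatedly, applying Lemma~\ref{lemma: cricket} simultaneously at $O(n)$ grid values of $\theta$ requires a uniformity in the $o(1)$ error terms that the lemma does not directly provide. The paper's proof is built precisely to avoid both issues: it discretizes $[0,1)$ into $N$ intervals with $N$ \emph{fixed} independent of $n$ (so adjacent exponents differ by $nD(1/2\|p)/N$, which grows linearly and makes the differencing in \eqref{eq: ML} legitimate, and only finitely many applications of Lemma~\ref{lemma: cricket} are needed), sandwiches the conditional error over each interval using its monotonicity in $M_n$ (inequality~\eqref{EqnEn}), and then sends $N\to\infty$ using continuity of $\hat I$ to recover the continuum infimum. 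Alternatively, your lower bound can be patched without any pointwise estimate: for the optimal $\theta^*$, write $\P(\cE_n) \ge \P(M_n \le n(1-\theta^*)) \cdot \min_{k \le n(1-\theta^*)} \P(\cE_n \mid M_n = k)$, and by the same monotonicity the minimum is attained at $k=\lfloor n(1-\theta^*)\rfloor$, where Lemma~\ref{lemma: cricket} gives exponent $\hat I(\theta^*)$; together with Theorem~\ref{thm: magic} this yields the exponent $\theta^* D(1/2\|p)+\hat I(\theta^*)$. Either way, the fixed-grid-plus-monotonicity device is the missing ingredient, and it is the substance of the paper's argument.
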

The learning rate provided in Thereom~\ref{ThmCricket} can be approximately calculated using computing software; see Figure~\ref{fig: corona}. 
\begin{proof}
Let $\cE_n$ be the error event that the number of $+1$'s received by the student is at most $n(1-\delta)$. Consider an integer $N > 0$, whose value will be specified later. We divide the interval $[0,1)$ into intervals $\{L_i\}_{i=0}^{N-1}$, where $L_i \defn \left[\frac{i}{N}, \frac{i+1}{N}\right)$. The probability of error can then be written as
\begin{align}
\label{EqnProbErr}
\P(\cE_n) &= \sum_{i=0}^{N-1} \P\left(\frac{M_n}{n} \in L_i\right) \P\left(\cE_n \Big | \frac{M_n}{n} \in L_i\right).
\end{align}

Note that
\begin{align}
\label{EqnMnExpand}
\P\left(\frac{M_n}{n} \in L_i\right) = \P\left(\frac{M_n}{n} < \frac{i+1}{N}\right) - \P\left(\frac{M_n}{n} < \frac{i}{N}\right).
\end{align}
Let $\epsilon_1 > 0$ be an arbitrarily small constant. By Theorem~\ref{thm: magic}, we know that for all $0 \leq i \leq N-1$ and for all large enough $n$, the following inequality holds:
\begin{align}\label{eq: ML}
\Big|- \frac{1}{n} \log \P\left(\frac{M_n}{n} \in L_i\right) - \frac{N-i-1}{N} D\left(1/2 || p \right)\Big| < \epsilon_1,
\end{align}
since the first term in equation~\eqref{EqnMnExpand} dominates.

%The probability of error may be bounded from both sides via the relation
%\begin{align*}
%\sum_{i=0}^{N-1} e^{-n\left( \frac{N-i-1}{N} D(1/2 || p) + \epsilon_1 \right)} \P\left(\cE_n \Big | \frac{M_n}{n} \in L_i\right) \leq \P(\cE_n) &\leq \sum_{i=0}^{N-1} e^{-n\left( \frac{N-i-1}{N} D(1/2 || p) - \epsilon_1 \right)} \P\left(\cE_n \Big | \frac{M_n}{n} \in L_i\right).
%\end{align*}
Turning to the second term in the summand of equation~\eqref{EqnProbErr}, we note that the probability of error is a monotonically decreasing function of $\frac{M_n}{n}$, so
\begin{align}
\label{EqnEn}
\P\left(\cE_n \Big | M_n = \left\lceil n \cdot \frac{i+1}{N} \right\rceil \right) \leq \P\left(\cE_n \Big | \frac{M_n}{n} \in L_i\right) \leq \P\left(\cE_n \Big | M_n = \left\lfloor n \cdot \frac{i}{N}\right\rfloor\right).
\end{align}
Let $\epsilon_2>0$ be an arbitrarily small constant. Using the large deviations principle from Lemma~\ref{lemma: cricket}, we know that for all large enough $n$ and for all $0 \leq i \leq N-1$, we have the bound
\begin{align}\label{eq: M}
\Big| \frac{1}{n} \log \P\left(\cE_n \Big | M_n= \left\lfloor n \cdot \frac{i}{N}\right\rfloor \right) - \hat I\left(\frac{N-i}{N}\right) \Big| < \epsilon_2,
\end{align}
since conditioned on the fraction $\frac{M_n}{n} \approx \frac{i}{N}$ of +1's that are transmitted by the teacher, the distribution of +1's received by the student behaves exactly as the mixture of Bernoulli distributions in Lemma~\ref{lemma: cricket} with $\theta = 1 - \frac{i}{N}$. Furthermore, the conditional probability of the error event exactly corresponds to the infimum of the rate function over the interval $[0,1-\delta]$. We have a similar bound for the left-hand expression in inequality~\eqref{EqnEn}.

Combining the bounds from inequalities~\eqref{eq: ML} and \eqref{eq: M}, we then obtain
\begin{align*}
\P(\cE_n) &\leq \sum_{i=0}^{N-1} e^{-n\left( \frac{N-i-1}{N} D(1/2 || p) + \hat I\left(\frac{N-i}{N}\right)- \epsilon_1 -\epsilon_2\right)}, \\
\P(\cE_n) &\geq \sum_{i=0}^{N-1} e^{-n\left( \frac{N-i-1}{N} D(1/2 || p) + \hat I\left(\frac{N-i-1}{N}\right) +\epsilon_1 +\epsilon_2\right)}.
\end{align*}
Let $\epsilon_3 > 0$ be an arbitrarily small constant. Define the three quantities
\begin{align*}
u &\defn \inf_{x \in [0,1]} \left\{xD(1/2 || p) + \hat I(x)\right\}, \\
\bar u &\defn \inf_{0 \leq i \leq N-1} \left\{ \frac{N-i-1}{N} D(1/2 || p) + \hat I\left(\frac{N-i}{N}\right)\right\}, \\
\underbar{$u$} &\defn \inf_{1 \leq i \leq N-1} \left\{\frac{N-i-1}{N} D(1/2 || p) + \hat I\left(\frac{N-i-1}{N}\right)\right\}.
\end{align*}
Using the continuity of $\hat I$, we can now pick $N$ (depending only on $\epsilon_3$ and $p$) such that 
\begin{align*}
\max(|u - \bar u|, |u - \underbar{$u$}|) < \epsilon_3.
\end{align*}
Then for all large enough $n$, we have the bounds
\begin{align*}
\P(\cE_n) &\leq N e^{-n(u-\epsilon_1-\epsilon_2-\epsilon_3)},\\
\P(\cE_n) &\geq e^{-n(u+\epsilon_1+\epsilon_2+\epsilon_3)}.
\end{align*}
Taking logarithms, dividing by $n$, and taking the limit, we see that
\begin{align*}
\Big| \lim_{n \to \infty} \frac{1}{n} \log \P(\cE_n) - u \Big| < \epsilon_1+\epsilon_2+\epsilon_3.
\end{align*}
Since $\epsilon_1, \epsilon_2$, and $\epsilon_3$ are arbitrary constants, we conclude that 
\begin{align*}
\lim_{n \to \infty} \frac{1}{n} \log \P(\cE_n) = u,
\end{align*}
completing the proof.
\end{proof}

\begin{remark*}
The learning rate in Theorem~\ref{ThmCricket} may be simplified further: Note that when $\bar \theta \bar q + \theta q \in [0, 1-\delta]$, the value of $\hat I(\theta)$ is 0, since $I_\theta(\bar \theta \bar q + \theta q) = 0$. The constraint $\bar \theta \bar q + \theta q \in [0, 1-\delta]$ is equivalent to $\theta \in \left[\frac{\bar q}{\bar q - q}, \frac{\delta - q}{\bar q - q}\right]$. Hence,
the minimum of $\theta D(1/2 || p) + \hat I(\theta)$ is clearly achieved over this range for $\theta = \frac{\delta - q}{\bar q - q}$. Further note that for $\theta \in \left[0, \frac{\delta-q}{\bar q -q}\right]$, we have the equality 
$\hat I(\theta) = I_\theta(1-\delta)$, since $I_\theta$ is convex and minimized at $w = \bar \theta \bar q + \theta q > 1-\delta$. Thus, we can rewrite the learning rate as 
\begin{align*}
\cR = \inf_{\theta \in \left[0, \frac{\delta-q}{\bar q -q}\right]} \Big\{\theta D(1/2 || p) + I_\theta(1-\delta)\Big\}.
\end{align*}
This expression does not simplify further; but since the function being minimized is known in closed form, the value of $\cR$ is easy to simulate using Matlab or similar software.
\end{remark*}

%%%%%

\section{$\epsilon$-majority learning}
\label{SecEpsMajority}

In the previous section, we assumed that the student is simply a majority learner; i.e., the student takes the majority of her observations to ultimately decide the value of $\hat \Theta$. We now explore the alternative strategy where the student's final estimate is computed to be the majority over the final $\epsilon n$ observations, corresponding to a more lazy or more skeptical student. We assume that the student has access to the problem parameters $p$ and $q$, and utilizes the value of $\epsilon$ that maximizes her learning rate.

%%%%%

\subsection{Simple forwarding}

Suppose that the teacher is simply forwarding his observations to the student and the student is taking an $\epsilon$-majority based on her observations. It is easy to see that the learning rate for the student in this case is simply $\epsilon D(1/2 || p \star q)$, since her estimate is always a majority over some number of i.i.d.\ observations. Thus, the optimal choice of $\epsilon$ is 1, and the best learning rate for $\epsilon$-majority learning simply coincides with majority learning.

%%%%%

\subsection{$\epsilon$-teaching}

We first analyze the $\epsilon$-teaching strategy, where the teacher transmits noise in the first $(1-\epsilon)n$ time steps, and then repeatedly transmits his best guess at that point in the remaining $\epsilon n$ time steps.

\begin{theorem}
\label{thm: eps_teaching}
For the $\epsilon$-teaching strategy, the student's learning rate, optimized over all values of $\epsilon \in [0,1]$, is given by
\begin{equation*}
\cR = \frac{D(1/2||p) D(1/2 ||q)}{D(1/2 || p) + D(1/2 || q)}.
\end{equation*}
\end{theorem}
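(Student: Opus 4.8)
The plan is to condition on whether the teacher's single repeated guess is correct, which reduces the analysis to two large-deviation computations for i.i.d.\ Bernoulli majorities. Write $\hat X$ for the teacher's guess formed from the first $\lfloor(1-\epsilon)n\rfloor$ observations, and recall that we have assumed $\Theta = +1$. The event $\{\hat X = -1\}$ is exactly the event that a majority of $\lfloor(1-\epsilon)n\rfloor$ i.i.d.\ Bernoulli$(\bar p)$ variables falls below $1/2$. By the large deviation principle for i.i.d.\ Bernoulli majorities (the same tool that yields the simple-forwarding rate $D(1/2 \,||\, p\star q)$), this has probability $\P(\hat X = -1) \approx e^{-(1-\epsilon)n\, D(1/2 \,||\, p)}$, where I use the symmetry $D(1/2 \,||\, \bar p) = D(1/2 \,||\, p)$.

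Next I would analyze the student's majority over the final $\lfloor \epsilon n\rfloor$ received bits, conditioned on the value of $\hat X$. Since the teacher transmits the \emph{same} bit $\hat X$ throughout this window, conditioning on $\hat X$ makes the received bits i.i.d.\ through the binary symmetric channel with parameter $q$, so no mixture arises and the full generality of Lemma~\ref{lemma: cricket} is not required here. If $\hat X = +1$ (teacher correct), each received bit equals $+1$ with probability $\bar q$, and the student errs only if the majority dips below $1/2$, an event of probability $\approx e^{-\epsilon n\, D(1/2 \,||\, q)}$. If $\hat X = -1$ (teacher wrong), each received bit equals $+1$ with probability $q < 1/2$, so the student's majority is $-1$ with probability $\approx 1$; that is, the student almost surely inherits the teacher's mistake, and the conditional error probability tends to $1$.

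Combining these via the law of total probability gives, to leading exponential order,
\begin{equation*}
\P\bigl(\hat\Theta_n \neq \Theta\bigr) \approx e^{-\epsilon n\, D(1/2 \,||\, q)} + e^{-(1-\epsilon)n\, D(1/2 \,||\, p)},
\end{equation*}
the first term arising from channel noise corrupting a correct guess and the second from the teacher's own error. The dominant (larger) term controls the decay, so for fixed $\epsilon$ the learning rate is
\begin{equation*}
\cR(\epsilon) = \min\bigl\{\epsilon\, D(1/2 \,||\, q),\; (1-\epsilon)\, D(1/2 \,||\, p)\bigr\}.
\end{equation*}

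Finally I would optimize over $\epsilon \in [0,1]$. Since $\epsilon\, D(1/2 \,||\, q)$ is increasing and $(1-\epsilon)\, D(1/2 \,||\, p)$ is decreasing in $\epsilon$, the minimum of the two is maximized at the crossing point where $\epsilon\, D(1/2 \,||\, q) = (1-\epsilon)\, D(1/2 \,||\, p)$; solving gives $\epsilon^\star = D(1/2 \,||\, p)/\bigl(D(1/2 \,||\, p)+D(1/2 \,||\, q)\bigr)$, and substituting back yields the claimed rate $\tfrac{D(1/2 \,||\, p)\,D(1/2 \,||\, q)}{D(1/2 \,||\, p)+D(1/2 \,||\, q)}$. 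The step requiring genuine care is the middle one: verifying that these two error mechanisms are the \emph{only} exponentially relevant ones and that the cross terms in the total-probability expansion are negligible, so that the overall error exponent is exactly the minimum of the two individual exponents and not something strictly smaller. The concluding optimization is then elementary calculus.
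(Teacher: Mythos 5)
Your proposal is correct and follows essentially the same route as the paper's proof: the paper likewise decomposes the error via the law of total probability into the teacher's own error (exponent $\bar\epsilon\, D(1/2\,\|\,p)$) and the channel corrupting a correct guess (exponent $\epsilon\, D(1/2\,\|\,q)$), observes that the conditional error given a wrong teacher guess is essentially $1$, and optimizes the minimum of the two exponents at the crossing point. The independence of the teacher-to-student channel noise from the teacher's guess is exactly what makes the cross terms harmless, just as you anticipated.
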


\begin{proof}
Let the teacher's estimate at time $\bar \epsilon n$, obtained by taking a majority over his observations until that time, be $\tilde \Theta$. The teacher then transmits $\tilde \Theta$ repeatedly for the final $\epsilon n$ time slots, and the student takes a majority over her $\epsilon n$ observations to determine $\hat \Theta$. The probability of error is thus calculated to be
\begin{align}
\label{EqnErrTheta}
\P(\hat \Theta \neq \Theta) &= \P(\hat \Theta \neq \Theta, \tilde \Theta \neq \Theta) + \P(\hat \Theta \neq \Theta, \tilde \Theta = \Theta) \notag \\
&= \P(\hat \Theta \neq \Theta | \tilde \Theta \neq \Theta)\P(\tilde \Theta \neq \Theta) + \P(\hat \Theta \neq  \Theta | \tilde \Theta = \Theta)\P(\tilde \Theta = \Theta) \notag \\
&= \P(\hat \Theta = \tilde \Theta)\P(\tilde \Theta \neq \Theta) + \P(\hat \Theta \neq  \tilde \Theta)\P(\tilde \Theta = \Theta).
\end{align}
Notice that
\begin{align*}
\P(\tilde \Theta \neq \Theta) \approx \exp(-n\bar\epsilon D(1/2 || p)),
\end{align*}
and 
\begin{align*}
\P(\hat \Theta \neq \tilde \Theta) \approx \exp(-n\epsilon D(1/2 || q)),
\end{align*}
since the teacher obtains $\tilde \Theta$ from $n\bar \epsilon$ observations of $\Theta$ through a binary symmetric channel with error probability $p$, and the student obtains $\hat \Theta$ from $n\epsilon$ observations of $\tilde \Theta$ through a binary symmetric channel with error probability $q$.

Substituting back into equation~\eqref{EqnErrTheta}, we see that the learning rate is determined by
\begin{align*}
\min(\bar \epsilon D(1/2 || p), \epsilon D(1/2 || q)).
\end{align*}
It is not hard to see that this expression is maximized when 
\begin{align*}
\epsilon = \frac{D(1/2 || p)}{D(1/2 || p)+D(1/2 || q)},
\end{align*}
yielding the specified learning rate.
%and the resulting learning rate is 
%\begin{align*}
%\frac{D(1/2 || p)D(1/2 || q)}{D(1/2 || p)+D(1/2 || q)}.
%\end{align*}
\end{proof}

As we will see, the rate obtained in Theorem~\ref{thm: eps_teaching} is not optimal from the point of view of the teacher; however, the $\epsilon$-teaching strategy nonetheless provides a nice closed-form expression for the learning rate, which is convenient for comparison.

\begin{figure}
\begin{center}
\begin{tabular}{cc}
\includegraphics[width=3in]{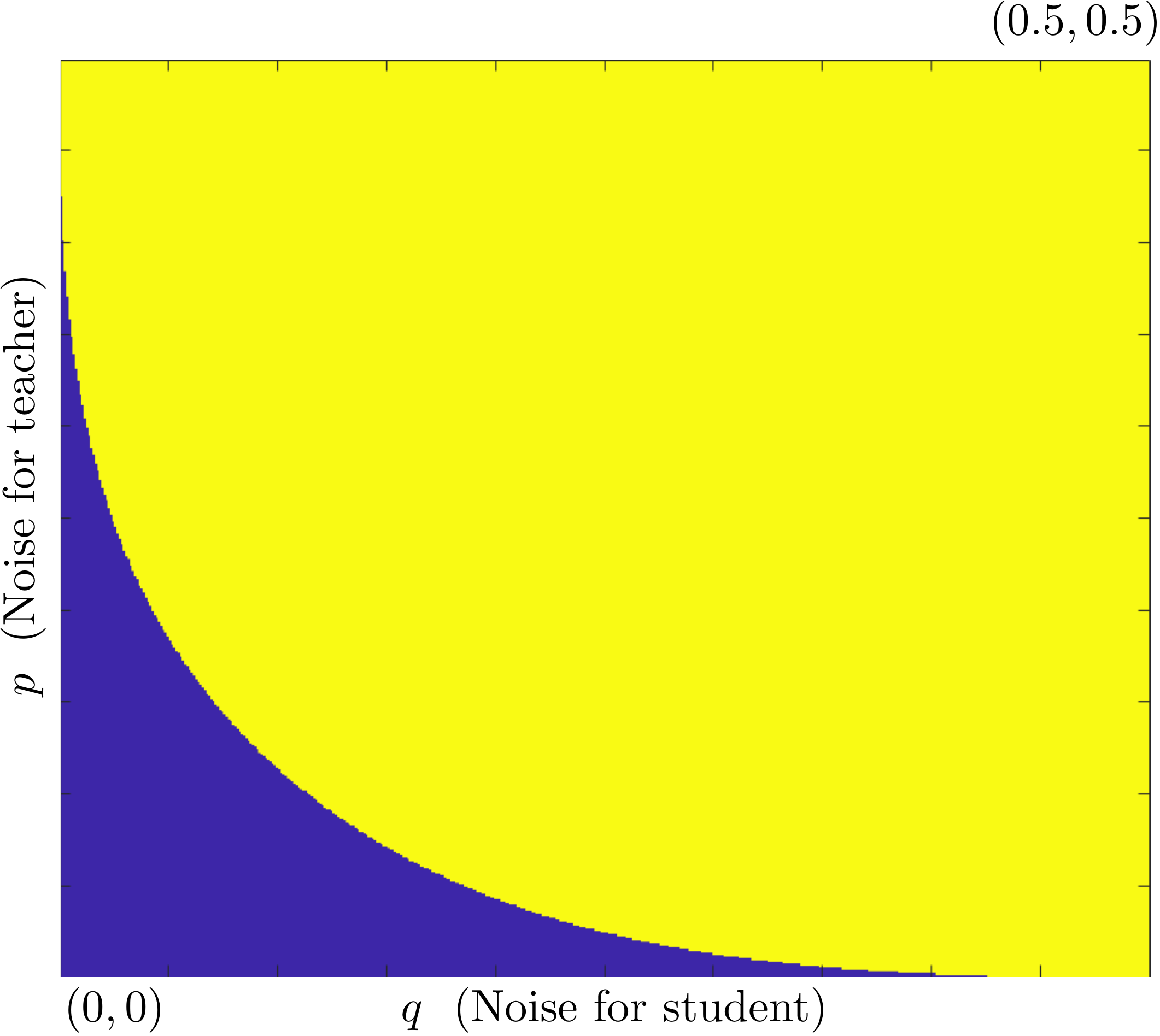} & \includegraphics[width=3in]{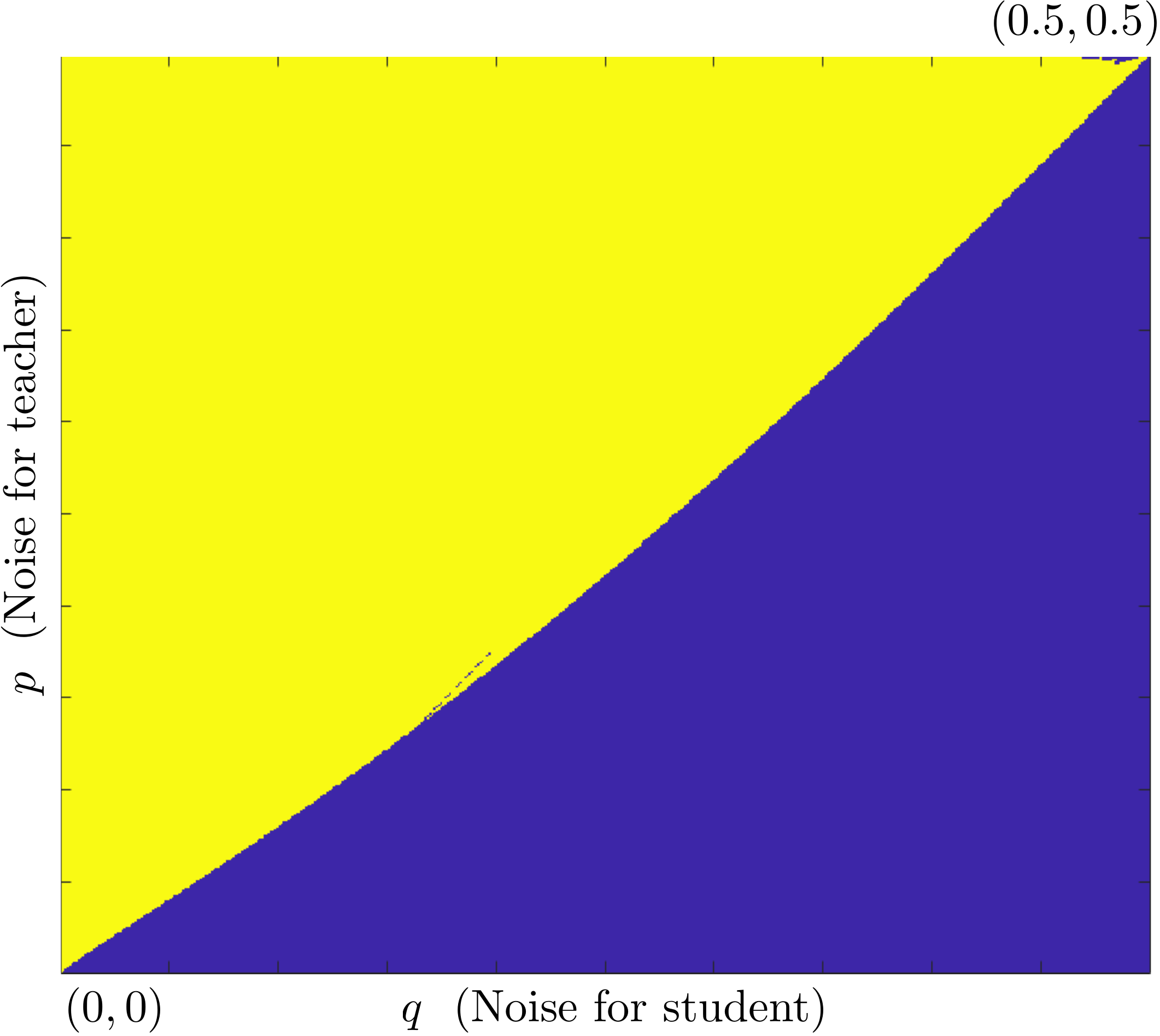} \\
(a) $\epsilon$-teaching + $\epsilon$-learning (yellow) vs.\ & (b) $\epsilon$-teaching + $\epsilon$-learning (yellow) vs.\ \\
simple forwarding + majority learning (blue) & cumulative teaching + majority learning (blue)
\end{tabular}
\caption{Plots showing the benefits of $\epsilon$-teaching + $\epsilon$-learning, analyzed in Theorem~\ref{thm: eps_teaching}. The figure on the right is simulated using the formula in Theorem \ref{ThmCricket}. In both figures, the yellow region indicates the parameter values where $\epsilon$-teaching + $\epsilon$-learning dominates the competing strategy based on majority learning. In (a), note that when the cumulative noise in the communication system is small, it is better to simply forward information. In (b), note that when the teacher's noise is large compared to the student's noise, the $\epsilon$-teaching strategy dominates the cumulative teaching strategy, since the teacher has less time to confuse himself.} \label{fig: corona}
\end{center}
\end{figure}

%%%%%

\subsection{Cumulative teaching}

In the above strategy, the teacher essentially stops learning after time $\bar \epsilon n$. Intuitively, the learning rate should be higher if the teacher continues to transmit better and better estimates of the state of the world in the latter $\epsilon n$ time steps, based on his own continual learning. Consequently, we now analyze the cumulative teaching strategy of the teacher, where the student continues to be an $\epsilon$-majority learner. Note that when $\epsilon = 1$, this strategy is identical to the cumulative teaching + majority learning combination studied before. Since the best choice of $\epsilon$ will outperform $\epsilon = 1$, the learning rate for this combination will be at least as high.

We claim that when the student is an $\epsilon$-learner, cumulative teaching is indeed at least as good as $\epsilon$-teaching. This is stated and proved as a warm-up in Proposition~\ref{PropCumEps}; in Theorem~\ref{ThmOptCum} to follow, we provide an expression for the precise learning rate.

\begin{proposition}	
\label{PropCumEps}
The learning rate of cumulative teaching and $\epsilon$-majority learning (with the optimum choice of $\epsilon$) is at least as large as the learning rate of the $\epsilon$-teaching strategy from Theorem \ref{thm: eps_teaching}.
\end{proposition}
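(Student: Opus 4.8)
The plan is to prove the stronger, $\epsilon$-by-$\epsilon$ statement: for every fixed $\epsilon \in [0,1]$, the learning rate $\cR_{\mathrm{cum}}(\epsilon)$ of cumulative teaching paired with $\epsilon$-majority learning is at least the $\epsilon$-teaching exponent $\min\{\bar\epsilon\, D(1/2||p),\, \epsilon\, D(1/2||q)\}$ that appears inside the proof of Theorem~\ref{thm: eps_teaching} before optimization. Taking the supremum over $\epsilon$ on both sides then gives
\[
\sup_{\epsilon}\cR_{\mathrm{cum}}(\epsilon) \;\ge\; \sup_{\epsilon}\min\{\bar\epsilon\, D(1/2||p),\, \epsilon\, D(1/2||q)\} \;=\; \frac{D(1/2||p)\,D(1/2||q)}{D(1/2||p)+D(1/2||q)},
\]
which is exactly the asserted inequality. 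Since we only need an upper bound on the student's error probability (equivalently, a lower bound on the rate), we are free to discard information and analyze a convenient sub-event rather than computing the exact exponent.

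The key device is to condition on $B$, the time of the last visit of the walk $\{X_n\}$ to $0$. Because $\{X_n\}$ is a nearest-neighbor walk drifting to $+\infty$ (recall $\Theta = +1$), after its final zero it stays strictly positive, so on the event $\{B \le \bar\epsilon n\}$ the teacher's guess satisfies $\hat X_i = +1$ for every index $i$ in the final segment $[\lceil \bar\epsilon n\rceil, n]$. I would then split
\[
\P(\text{error}) \;\le\; \P(B > \bar\epsilon n) \;+\; \P\big(\text{error} \mid B \le \bar\epsilon n\big).
\]
On $\{B \le \bar\epsilon n\}$ the student receives $\lfloor \epsilon n\rfloor$ i.i.d.\ BSC($q$) copies of the correct symbol $+1$ (the channel noise is independent of the walk, so conditioning on this walk event does not perturb it), and a majority vote over them fails with probability $\approx \exp(-n\epsilon D(1/2||q))$. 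For the first term I would establish $\P(B > \bar\epsilon n) \approx \exp(-n\bar\epsilon D(1/2||p))$, using that $\P(X_t = 0) = \binom{t}{t/2}(p\bar p)^{t/2} \approx \exp(-t\,D(1/2||p))$, which follows from the identity $D(1/2||p) = -\tfrac12\log(4p\bar p)$, together with the fact that this probability decays geometrically in $t$, so the sum over $t > \bar\epsilon n$ is dominated by its first term near $t \approx \bar\epsilon n$. Combining the two exponents yields $\cR_{\mathrm{cum}}(\epsilon) \ge \min\{\bar\epsilon\, D(1/2||p),\, \epsilon\, D(1/2||q)\}$, and the balancing choice $\epsilon = D(1/2||p)/(D(1/2||p)+D(1/2||q))$ completes the optimization.

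I expect the only genuinely delicate point to be the estimate $\P(B > \bar\epsilon n) \approx \exp(-n\bar\epsilon D(1/2||p))$ and its bookkeeping: one must verify that the dominant contribution really comes from the walk being at $0$ near time $\bar\epsilon n$ rather than from a later, rarer excursion, and one must handle the floor/ceiling boundary between the two segments so that ``$\hat X_i = +1$ throughout the final block'' holds cleanly on $\{B \le \bar\epsilon n\}$. Both are routine given the preliminary calculations on $\{X_n\}$ and the machinery already developed for Theorem~\ref{thm: magic}, and no large-deviations input beyond Theorem~\ref{thm: eps_teaching} and these walk estimates is required. This is consistent with the paper's framing of the result as a warm-up, since the \emph{exact} cumulative-teaching rate---which needs the full conditional analysis in the style of Theorem~\ref{ThmCricket}---is deferred to Theorem~\ref{ThmOptCum}.
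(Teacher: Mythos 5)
Your proposal is correct, and it shares the paper's central device---conditioning on whether the last zero $B$ of the teacher's random walk occurs before or after time $\bar\epsilon n$---but it closes the argument by a genuinely different mechanism. The paper's proof computes no exponents at all: it notes that $\mprob(B \le \bar\epsilon n)$ and $\mprob(B > \bar\epsilon n)$ do not depend on the teaching strategy, that the conditional error probabilities given $\{B \le \bar\epsilon n\}$ are \emph{identical} for cumulative and $\epsilon$-teaching (on this event both transmit $+1$ throughout the final block), and that on $\{B > \bar\epsilon n\}$ the $\epsilon$-teaching conditional error is exactly $\tfrac{1}{2}$ by symmetry while the cumulative one is at most $1$; hence the cumulative error probability is at most twice the $\epsilon$-teaching error probability, and a constant factor cannot alter the rate. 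You instead prove the explicit per-$\epsilon$ bound $\cR_{\mathrm{cum}}(\epsilon) \ge \min\{\bar\epsilon\, D(1/2 || p),\ \epsilon\, D(1/2 || q)\}$ by estimating $\mprob(B > \bar\epsilon n) \approx e^{-n\bar\epsilon D(1/2||p)}$ (this is precisely Lemma~\ref{LemMnSecond} applied at horizon $\bar\epsilon n$, so you need not re-derive it) together with a Chernoff bound for the majority over the BSC($q$) block, and then match the result against the exponent $\min\{\bar\epsilon\, D(1/2||p), \epsilon\, D(1/2||q)\}$ appearing inside the proof of Theorem~\ref{thm: eps_teaching}. Your route costs more machinery (two large-deviations estimates versus none) but buys more: an explicit lower bound on the cumulative-teaching rate for each fixed $\epsilon$, which is in fact a coarse preview of the decomposition used for the exact rate in Theorem~\ref{ThmOptCum}, whereas the paper's comparison argument yields domination without any quantitative handle on $\cR_{\mathrm{cum}}(\epsilon)$ itself. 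The two caveats you flag are real but harmless, as you suspect: the single boundary index where $B = \lceil\bar\epsilon n\rceil$ and $\hat X_B$ may still equal $-1$ cannot move the majority exponent, and the channel noise is independent of $\{Y_i\}$, hence of the event $\{B \le \bar\epsilon n\}$, by construction.
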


\begin{proof}
Fix $\epsilon > 0$, and let $B$ be the final time that the random walk $\{X_n\}$ visits 0. Let $\hat \Theta$ denote the final estimate of the student. We write
\begin{equation*}
\mprob(\hat \Theta \neq \Theta) = \mprob(\hat \Theta \neq \Theta \mid B \le \bar \epsilon n) \mprob(B \le \bar \epsilon n) + \mprob(\hat \Theta \neq \Theta \mid B > \bar \epsilon n) \mprob(B > \bar \epsilon n),
\end{equation*}
and compare the respective quantities in the cases of cumulative teaching vs.\ $\epsilon$-teaching. The probabilities $\mprob(B \le \bar \epsilon n)$ and $\mprob(B > \bar \epsilon n)$ have no dependence on the strategy employed by the teacher.

Note that conditioned on the event $\{B \leq \bar \epsilon n\}$, the $\epsilon$-teaching and cumulative teaching strategies are identical as far as the final $\epsilon n$ bits are concerned, so the values of $\mprob(\hat \Theta \neq \Theta \mid B \le \bar \epsilon n)$ are identical. Conditioned on the event $\{B > \bar \epsilon n\}$, the random walk $\{X_n\}$ is equally likely to be positive or negative at time $\bar \epsilon n$, so in the case of $\epsilon$-teaching, we have $\mprob(\hat \Theta \neq \Theta \mid B > \bar \epsilon n) = \frac{1}{2}$. In the case of cumulative teaching, we can lower-bound the learning rate by upper-bounding the error as $\mprob(\hat \Theta \neq \Theta \mid B > \bar \epsilon n) \le  1$.

It is easy to see that increasing the value of the error probability $\mprob(\hat \Theta \neq \Theta \mid B > \bar \epsilon n)$ by a factor of 2 cannot make the learning rate of cumulative teaching worse than the learning rate of $\epsilon$-teaching. Thus, we conclude that for every choice of $\epsilon > 0$, the learning rate of cumulative teaching is at least as high as the learning rate of $\epsilon$-teaching. Optimizing the student's choice of $\epsilon$ when the teacher employs cumulative teaching can only increase the learning rate further, completing the proof.
\end{proof}

Using a more sophisticated argument, we can obtain an expression for the learning rate of cumulative teaching when the student is an $\epsilon$-majority learner:

\begin{theorem}
\label{ThmOptCum}
The optimal learning rate for the cumulative teaching and $\epsilon$-learning strategy is
\begin{align*}
\cR = \sup_{\epsilon \in [0,1]} \left[ \min\left( \epsilon D(1/2 || q),\inf_{\alpha \in [0,1/2]} \left\{(\bar \epsilon +  \epsilon \alpha)D(1/2 || p)+ \epsilon I_\alpha(1/2)\right\}\right)\right],
\end{align*}
where $I_\alpha(\cdot)$ is the rate function defined in Lemma~\ref{lemma: cricket}.
\end{theorem}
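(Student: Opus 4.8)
The plan is to fix the window fraction $\epsilon$, compute the learning rate $\cR(\epsilon)$ for that fixed $\epsilon$, and only optimize over $\epsilon$ at the very end (since the student is free to choose it, $\cR = \sup_\epsilon \cR(\epsilon)$). For fixed $\epsilon$ I would isolate two qualitatively different and essentially disjoint ways for the student to err, corresponding to two regimes for the teacher's walk $\{X_n\}$ on the final block $[\bar\epsilon n, n]$: one in which the teacher has already settled on the correct answer before the window begins, and one in which the walk is ``held back'' so that the teacher transmits the wrong bit for a positive fraction of the window. Since the total error probability is the sum of the two contributions, its exponential rate is the minimum of the two rates, which will produce the outer $\min(\cdot,\cdot)$.

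For the first mechanism, with probability bounded below by a constant the walk has run off to $+\infty$ before time $\bar\epsilon n$ and stays positive throughout the window, so all $\epsilon n$ bits transmitted by the teacher in the window equal $+1$. Conditioned on this, the student observes $\epsilon n$ i.i.d.\ Bernoulli$(1-q)$ bits and her majority vote fails with probability $\approx e^{-\epsilon n D(1/2\|q)}$; this is exactly the $\theta=0$ specialization of Lemma~\ref{lemma: cricket}, since $I_0(1/2)=D(1/2\|q)$. As the conditioning event has probability $\Theta(1)$, this mechanism contributes a term of rate $\epsilon D(1/2\|q)$.

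For the second mechanism I would use the Markov property: conditioning on $X_{\bar\epsilon n}=0$ — an event of probability $\approx e^{-\bar\epsilon n D(1/2\|p)}$ by the large-deviation rate for the walk to be suppressed back to the origin — makes the restriction of $\{X_n\}$ to the window a fresh transient walk started at $0$. The student's majority over the window is then precisely the cumulative-teaching $+$ majority-learning problem over $\epsilon n$ steps already solved in Theorem~\ref{ThmCricket} with $\delta=1/2$; using the simplification in the remark following that theorem, its rate is $\inf_{\alpha\in[0,1/2]}\{\alpha D(1/2\|p)+I_\alpha(1/2)\}$, where $\alpha$ is the fraction of window steps on which the teacher is wrong. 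Here the restriction to $\alpha\in[0,1/2]$ is natural: for $\alpha>1/2$ the typical received $+1$-fraction $\alpha q+(1-\alpha)\bar q$ already lies below $1/2$, so $I_\alpha(1/2)=0$ while the walk cost keeps growing, hence no minimizer lies there. Scaling the window rate by $\epsilon n$ and adding the suppression cost $\bar\epsilon n D(1/2\|p)$ yields the second term $\inf_{\alpha\in[0,1/2]}\{(\bar\epsilon+\epsilon\alpha)D(1/2\|p)+\epsilon I_\alpha(1/2)\}$.

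To make the combination rigorous I would mirror the discretization in the proof of Theorem~\ref{ThmCricket}: partition $[0,1]$ into $N$ bins for $\alpha$, use the walk estimate together with Lemma~\ref{lemma: cricket} to sandwich each summand, note that the sum of the $O(n)$ terms is governed by its largest term, and send $N\to\infty$ using continuity of $\alpha\mapsto I_\alpha(1/2)$; taking $\sup_\epsilon$ of the resulting $\min(\epsilon D(1/2\|q),\ \inf_\alpha\{(\bar\epsilon+\epsilon\alpha)D(1/2\|p)+\epsilon I_\alpha(1/2)\})$ gives the stated formula. I expect the main obstacle to lie in the second mechanism's walk cost, specifically in showing that holding the walk at $0$ until $\bar\epsilon n$ and then dipping is genuinely the cheapest way to make the teacher wrong on a fraction $\alpha$ of the window, ruling out competing trajectories that enter the window at a positive height and then descend, or that scatter the wrong steps across the window. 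The clean tool here is the single-time bound $\P(X_T<0)\approx e^{-T D(1/2\|p)}$ with $T=(\bar\epsilon+\epsilon\alpha)n$, since the latest wrong window step forces $X_T<0$; the delicate point is then to confirm that conditioning on the \emph{count} of wrong window-transmissions leaves the student's received bits distributed as the independent Bernoulli mixture required by Lemma~\ref{lemma: cricket}, so that the walk cost and the decoding cost genuinely add.
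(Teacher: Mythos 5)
Your proposal is correct, and its overall architecture coincides with the paper's: fix $\epsilon$, identify the two competing error mechanisms (teacher already settled on the correct answer before the window vs.\ teacher wrong on a fraction $\alpha$ of the window), price the second mechanism as a walk cost $(\bar\epsilon+\epsilon\alpha)D(1/2\|p)$ plus a decoding cost $\epsilon I_\alpha(1/2)$ from Lemma~\ref{lemma: cricket}, argue $\alpha>1/2$ is dominated, and take $\min$ then $\sup_\epsilon$. Where you genuinely differ is in the bookkeeping for the walk cost. The paper decomposes according to the last zero-visit time $B=n_\alpha$ of the walk together with an indicator $I$ that the walk is negative on all of $[1,B]$; it computes $\P(B=n_\alpha, I=1)$ via Catalan numbers (as in Lemma~\ref{lemma: lobo}), notes $\P(B=n_\alpha)$ is within a polynomial factor, and uses monotonicity to reduce the $I=0$ case to the $I=1$ case; it also needs a third term $\{B>n\}$ of rate $D(1/2\|p)$, which it then shows is redundant. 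You instead decompose by the count of wrong window transmissions, which is exactly the statistic Lemma~\ref{lemma: cricket} consumes: for the upper bound, your ``latest wrong step forces $X_T\le 0$'' observation plus the Chernoff bound $\P(X_T\le 0)\le e^{-TD(1/2\|p)}$ and a union bound over $T$ replaces the Catalan computation, and the $\{B>n\}$ case is absorbed automatically; for the lower bound, planting $X_{\bar\epsilon n}=0$ (cost $\approx e^{-\bar\epsilon n D(1/2\|p)}$) and invoking Theorem~\ref{ThmCricket} with $\delta=1/2$ on the window via the Markov property (whose rate, by the remark following that theorem, is precisely $\inf_{\alpha\in[0,1/2]}\{\alpha D(1/2\|p)+I_\alpha(1/2)\}$) reuses an already-proved result where the paper reconstructs the all-negative sojourn trajectories by hand. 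The two delicate points you flag do resolve: the channel noise is independent of the walk, so conditional on the transmission pattern the student's error probability depends only on the count of wrong transmissions and is monotone in it; this is what makes the walk cost and decoding cost add, and what lets you sandwich within discretization bins exactly as in the proof of Theorem~\ref{ThmCricket}. The trade-off is that the paper's single decomposition via $B$ produces matching upper and lower bounds inside one framework, while your route is more modular---a cleaner upper bound and a black-box lower bound---at the price of running two separate arguments.
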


\begin{proof}
We will show that the learning rate for a fixed value of $\epsilon$ is given by the expression inside the objective function. Let $B$ denote the last time at which the random walk $\{X_n\}$ corresponding to the teacher's transmissions visits 0. We write the error probability of the student as
\begin{align}
\label{EqnErrProb}
\P(\hat \Theta \neq \Theta) = \P(\hat \Theta \neq \Theta, B\leq (1-\epsilon)n) + \P(\hat \Theta \neq \Theta, (1-\epsilon)n < B \leq n) + \P(\hat \Theta \neq \Theta, B>n). 
\end{align}
Naturally, the largest of the three error probabilities determines the overall learning rate. For the third term in equation~\eqref{EqnErrProb}, we observe that
\begin{align*}
\P(\hat \Theta \neq \Theta, B>n) &= \P(B > n) P(\hat \Theta \neq \Theta | B>n)\\
&= \frac{1}{2} \P(B > n) \\
&\approx \exp(-n D(1/2 || p)),
\end{align*}
where the second equality holds by the symmetry of the random walk conditioned on $\{B > n\}$, and the final approximation follows from the proof of Lemma~\ref{LemMnSecond} from Appendix~\ref{AppLemMnSecond}. For the first term in equation~\eqref{EqnErrProb}, notice that if $B \le n(1-\epsilon)$, the teacher will transmit $+1$'s for the entire duration over which the student is taking a majority. The student's error probability is the probability of receiving more than $\frac{n\epsilon}{2}$ observations of $-1$'s in the final $n\epsilon$ observations, so we have
\begin{align*}
\P(\hat \Theta \neq \Theta, B\leq (1-\epsilon)n) &= \P(\hat \Theta \neq \Theta | B\leq (1-\epsilon)n) \cdot \P( B\leq (1-\epsilon)n )\\
&\approx \exp(-n\epsilon D(1/2 || q)),
\end{align*}
using the fact that $\mprob(B > (1-\epsilon)n) \approx \exp(-(1-\epsilon)n D(1/2 \| p))$, so $\mprob(B \le (1-\epsilon)n ) = 1 - o(1)$.

Turning to the middle term in equation~\eqref{EqnErrProb}, suppose we fix $\alpha \in [0,1]$, and suppose $B = \lfloor (1-\epsilon)n + \alpha \epsilon n \rfloor$. As we will see, the error probability is dominated by the error probability of a random walk that is negative at all time instances up to $B$. Thus, we define the indicator variable $I = 1$ if $\{X_n\}$ is negative for all time instances up to $B$, and 0 otherwise. We then write
\begin{align*}
\P(B = \lfloor n\bar \epsilon + n\epsilon \alpha \rfloor, \hat \Theta \neq \Theta) & = \mprob(\hat \Theta \neq \Theta \mid B = n_\alpha, I = 0) \mprob(B = n_\alpha, I = 0) \\
& \qquad + \mprob(\hat \Theta \neq \Theta \mid B = n_\alpha, I = 1) \mprob(B = n_\alpha, I = 1),
\end{align*}
where we have used the shorthand $n_\alpha := \lfloor n \bar \epsilon + n \epsilon \alpha \rfloor$. Note that
\begin{align*}
\mprob(B = n_\alpha) & = (1-2p){n_\alpha \choose n_\alpha/2} p^{n_\alpha/2} \bar p^{n_\alpha/2}, \\
\mprob(B = n_\alpha, I = 1) & = (1-2p) \frac{1}{n_\alpha/2} {n_\alpha - 2 \choose n_\alpha/2 - 1} p^{n_\alpha/2} \bar p^{n_\alpha/2},
\end{align*}
where the second expression involves the corresponding Catalan number. Furthermore, by the same argument used in the proof of Lemma~\ref{lemma: lobo}, we know that
\begin{equation}
\label{EqnBna}
\mprob(B = n_\alpha, I = 1) \approx \exp(-n(\bar \epsilon + \epsilon \alpha) D(1/2 \| p)).
\end{equation}
Since $\mprob(B = n_\alpha)$ is at most a polynomial factor of $n$ larger, we conclude that
\begin{equation*}
\mprob(B = n_\alpha, I = 0) \approx \exp(-n(\bar \epsilon + \epsilon \alpha) D(1/2 \| p)),
\end{equation*}
as well. Turning to the conditional probability terms, note that
\begin{equation*}
 \mprob(\hat \Theta \neq \Theta \mid B = n_\alpha, I = 0) \le  \mprob(\hat \Theta \neq \Theta \mid B = n_\alpha, I = 1).
\end{equation*}
Hence, the error probability $\P(B = n_\alpha, \hat \Theta \neq \Theta)$ is within a poly$(n)$ factor of $\mprob(\hat \Theta \neq \Theta \mid B = n_\alpha, I = 1) \mprob(B = n_\alpha, I = 1)$, implying that
\begin{equation*}
\lim_{n \rightarrow \infty} \frac{1}{n} \log \mprob(B = n_\alpha, \hat \Theta \neq \Theta) = \lim_{n \rightarrow \infty} \frac{1}{n} \log \left(\mprob(\hat \Theta \neq \Theta \mid B = n_\alpha, I = 1) \mprob(B = n_\alpha, I = 1)\right).
\end{equation*}

Now suppose $\alpha \in [0,1/2]$. We have
\begin{equation*}
\mprob(\hat \Theta \neq \Theta \mid B = n_\alpha, I = 1) \approx \exp(-n\epsilon I_\alpha(1/2)),
\end{equation*}
since conditioned on the event $\{B = n_\alpha, I = 1\}$, we know that the teacher transmits $\epsilon \alpha n$ values that are -1 and $(\epsilon n - \epsilon \alpha n)$ values that are +1 to the student in the last $\epsilon n$ time steps, so the error probability is given by the rate function in Lemma~\ref{lemma: cricket}. Combining the equations, we have
\begin{align*}
\mprob(B = n_\alpha, \hat \Theta \neq \Theta) & \approx \exp(-n(\bar \epsilon +  \epsilon \alpha)D(1/2 || p)) \cdot \exp(-n \epsilon I_\alpha(1/2))\\
&= \exp( -n((\bar \epsilon +  \epsilon \alpha)D(1/2 || p)+ \epsilon I_\alpha(1/2) )).
\end{align*}
Finally, for $\alpha > \frac{1}{2}$, note that
\begin{equation*}
\mprob(B = n_\alpha, I = 1) \le \mprob(B = n_{1/2}, I = 1)
\end{equation*}
by equation~\eqref{EqnBna}, and
\begin{equation*}
\mprob(\hat \Theta \neq \Theta \mid B = n_\alpha, I = 1) \le 1 = 2 \mprob(\hat \Theta \neq \Theta \mid B = n_{1/2}, I = 1).
\end{equation*}
Thus, we conclude that the error rate is given by
\begin{equation*}
\inf_{\alpha \in [0,1]} \lim_{n \rightarrow \infty} \frac{-1}{n} \log \mprob(B = n_\alpha, \hat \Theta \neq \Theta) = \inf_{\alpha \in [0,1/2]} \left\{(\bar \epsilon +  \epsilon \alpha)D(1/2 || p)+ \epsilon I_\alpha(1/2)\right\}.
\end{equation*}

%Note that the worst-case error event is when the random walk $\{X_n\}$ is entirely negative for all time instants until $B$, for two reasons:
%\begin{itemize}
%\item[(i)] Conditioned on $B$, an entirely negative path happens with probability $\Theta(1/n)$, which is as good as probability 1 in the exponential scale.
%\item[(ii)] This path gives the maximum number of $-1$'s in the final $\epsilon n$ transmissions, which leads to the largest error probability for the student. The probability of such a path is approximately $\exp(-n(\bar \epsilon +  \epsilon \alpha)D(1/2 || p))$. Conditioned on this event, notice that the teacher's transmissions in the final $\epsilon n$ time slots consist of $n \epsilon \alpha$ transmissions of $-1$'s and $n\epsilon \bar \alpha$ transmissions of $+1$'s. Each of these transmitted bits gets flipped with probability $q$, and the student makes an error if she receives a majority of $-1$'s.
%\end{itemize}
%If $\alpha > 1/2$, then the probability that the student makes an error is at least $1/2$. Thus, we may assume $\alpha \in [0,1/2]$, since the error event $\alpha = 1/2$ dominates the error events corresponding to $\alpha > 1/2$. Assuming $\alpha \in [0, 1/2]$, the error probability can be calculated directly using Lemma \ref{lemma: cricket} as $\exp(-n\epsilon I_\alpha(1/2))$, if $\alpha \in [0,1/2]$. Thus,

Combining the bounds for the three terms in equation~\eqref{EqnErrProb}, the overall learning rate for a fixed choice of $\epsilon$ is therefore equal to
\begin{align*}
\min\left( D(1/2 || p), \epsilon D(1/2 || q),\inf_{\alpha \in [0,1/2]} \left\{(\bar \epsilon +  \epsilon \alpha)D(1/2 || p)+ \epsilon I_\alpha(1/2)\right\}\right). 
\end{align*}
Since the student is allowed to tune $\epsilon$, the optimal learning rate is therefore
\begin{align*}
\sup_{\epsilon \in [0,1]} \left[ \min\left( D(1/2 || p), \epsilon D(1/2 || q),\inf_{\alpha \in [0,1]} \left\{(\bar \epsilon +  \epsilon \alpha)D(1/2 || p)+ \epsilon I_\alpha(1/2)\right\}\right)\right]. 
\end{align*}

Finally, note that we may drop the term $D(1/2 || p)$ from the inner minimization, since when $\epsilon = 0$ (and for any choice of $\alpha$), we have
\begin{equation*}
(\bar \epsilon + \epsilon \alpha) D(1/2 \| p) + \epsilon I_\alpha(1/2) = D(1/2 \| p).
\end{equation*}
%when $\alpha = 0$ and $\epsilon \to 0$, we obtain
%\begin{align*}
%\sup_{\epsilon \in [0,1]} \left[\inf_{\alpha \in [0,1]} \left\{(\bar \epsilon +  \epsilon \alpha)D(1/2 || p)+ \epsilon I_\alpha(1/2)\right\}\right] &\leq \sup_{\epsilon \in [0,1]} \left[ \inf_{\alpha = 0} \bar \epsilon D(1/2 || p)+ \epsilon I_0(1/2) \right]\\
%
%&= \sup_{\epsilon \in [0,1]} \left[\bar \epsilon D(1/2 || p) + \epsilon D(1/2 || q)\right]\\
%
%&\leq D(1/2 || p).
%\end{align*}
Thus, the optimal learning rate is
\begin{align*}
\cR = \sup_{\epsilon \in [0,1]} \left[ \min\left( \epsilon D(1/2 || q),\inf_{\alpha \in [0,1/2]} \left\{(\bar \epsilon +  \epsilon \alpha)D(1/2 || p)+ \epsilon I_\alpha(1/2)\right\}\right)\right]. 
\end{align*}
\end{proof}

%To see the benefit of picking the optimal $\epsilon$, consider the case of a perfect student; i.e., when $q=0$. When $q=0$, the simple majority learner (corresponding to $\epsilon = 1$) learns at the rate of $D(1/2 || p)/2$. However, a better strategy is for the student to look only at her final observation. If  $B < n$, the student is assured to learn the correct $\Theta$, and thus the student's learning rate is the same as the teacher's: $D(1/2 || p)$. Clearly, this is optimal. 

Note that when $q = 0$, i.e., the student is a perfect learner, the learning rate in Theorem~\ref{ThmOptCum} corresponds to the optimum of
\begin{equation*}
\sup_{\epsilon \in [0,1]} \inf_{\alpha \in [0,1/2]} \left\{(\bar \epsilon +  \epsilon \alpha)D(1/2 || p)+ \epsilon I_\alpha(1/2)\right\},
\end{equation*}
which occurs when $\epsilon = 0$ and produces a learning rate of $D(1/2 \| p)$. In other words, the optimal strategy of the student is to estimate the state of the world based on the final transmission of the teacher, in which case her learning rate agrees with the teacher's learning rate. Although it is generally not possible to simplify the expression in Theorem~\ref{ThmOptCum} further for other choices of $q$, it is easy to calculate the learning rate to a high degree of accuracy using standard computing software. See Figure \ref{FigThree} for a comparison of the $\epsilon$-learning strategy (and the majority learning strategy) with the simple forwarding + majority learning strategy.

\begin{figure}
\begin{center}
\begin{tabular}{cc}
\includegraphics[width=3in]{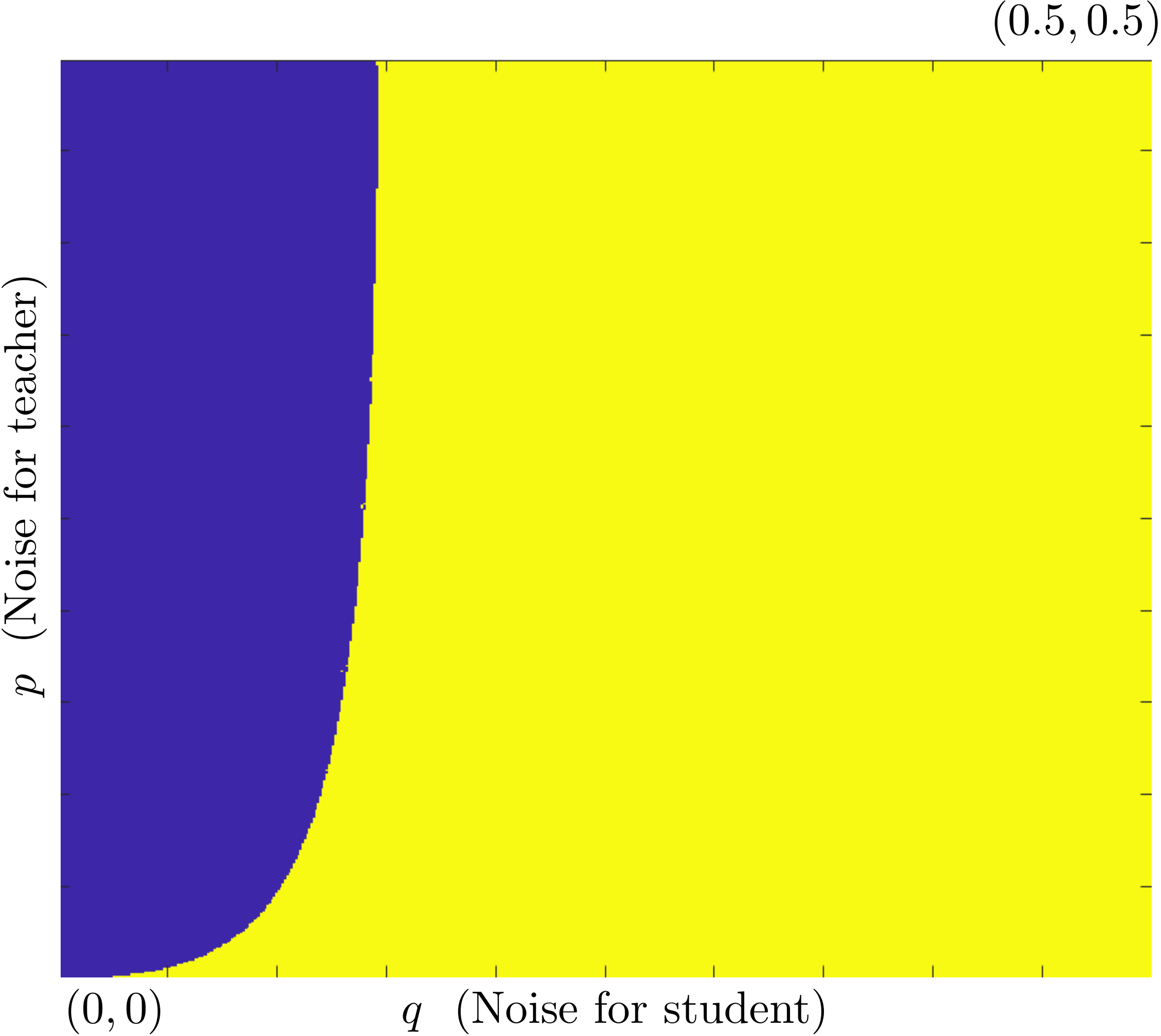} & \includegraphics[width=3in]{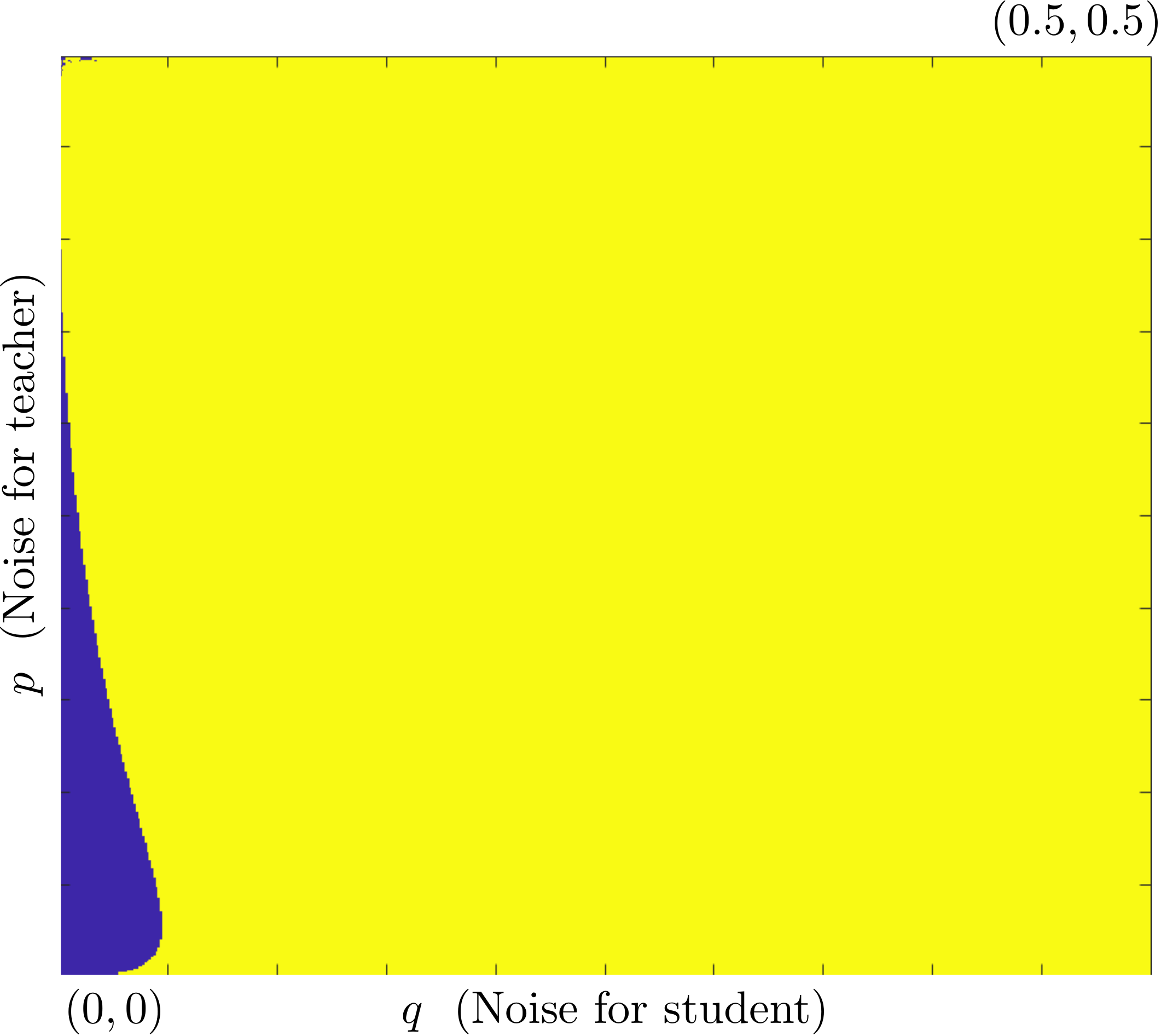} \\
(a) simple forwarding + majority learning (blue) vs.\ & (b) simple forwarding + majority learning (blue) vs.\ \\
cumulative teaching + majority learning (yellow) & cumulative teaching + $\epsilon$-majority learning (yellow)
\end{tabular}
%\caption{Plots showing the benefits of $\epsilon$-teaching + $\epsilon$-learning, analyzed in Theorem~\ref{thm: eps_teaching}. The figure on the right is simulated using the formula in Theorem \ref{ThmCricket}. In both figures, the yellow region indicates the parameter values where $\epsilon$-teaching + $\epsilon$-learning dominates the competing strategy based on majority learning. In (a), note that when the cumulative noise in the communication system is small, it is better to simply forward information. In (b), note that when the teacher's noise is large compared to the student's noise, the $\epsilon$-teaching strategy dominates the cumulative teaching strategy, since the teacher has less time to confuse himself.} 
\caption{Plots showing the benefits of $\epsilon$-majority learning over majority learning, when the teacher employs a cumulative teaching strategy. The learning rate for the simple forwarding + majority learning strategy is $D(1/2 || p \star q)$; the learning rate for the cumulative teaching + majority learning strategy is simulated using the formula in Theorem~\ref{ThmCricket}; and the learning rate for the cumulative teaching + $\epsilon$-majority learning strategy is simulated using the formula derived in Theorem~\ref{ThmOptCum}. In both figures, the blue region indicates the parameter values where simple forwarding + majority learning dominates the competing strategy. In (a), note that the simple forwarding + majority learning strategy is better for smaller values of $q$ (i.e., sharper students). However, we see a curious threshold around $q = 0.15$, such that if the student's channel is more noisy that this value, it is better to use the cumulative teaching strategy \emph{regardless} of the teacher's noise parameter. In (b), note that the blue region is significantly smaller, since the $\epsilon$-majority learning strategy is uniformly better than the majority learning strategy. As before, we observe that the simple forwarding + majority learning strategy is better for smaller values of $q$. Furthermore, for all $q$ larger than approximately 0.05, it is always preferable to use the $\epsilon$-majority learning strategy over the majority learning strategy. 
}
\label{FigThree}
\end{center}
\end{figure}

%%%%%

\section{Gaussian learning}
\label{SecGaussian}

We now turn to a continuous analog of the teacher-student learning problem. Suppose the state of the world corresponds to a parameter $\mu \in \real$. The teacher receives observations $\{y_i\}_{i=1}^n$, where $y_i \sim N(\mu, \sigma_1^2)$ are i.i.d. Furthermore, at time step $i$, the teacher transmits an estimate $x_i = f_i(y_1, \dots, y_i)$ to the student, who receives $z_i = x_i + \epsilon_i$, where $\epsilon_i \stackrel{i.i.d.}{\sim} N(0, \sigma_2^2)$. In other words, the teacher observes the state of the world through a Gaussian channel with noise variance $\sigma_1^2$, and the student observes the teacher's transmissions through a Gaussian channel with noise variance $\sigma_2^2$. The final estimate of the student after $n$ time steps is some function $\thetahat(z_1, \dots, z_n)$.

We again seek to compare different teaching/learning strategies, where the learning rate of the student is now characterized by the parameter pairs $(\sigma_1, \sigma_2)$ governing the two channels, rather than the pairs $(p,q)$. In the continuous parameter setting, we replace the notion of a majority learner (which no longer makes sense) by a learner who simply takes an average over all observations. More broadly, we allow both the teacher and student to learn by taking a linear combination of observations; i.e., the functions $\{f_i\}_{i=1}^n$ and $\thetahat$ are linear. We compare various strategies in terms of the variance $\Var(\thetahat)$  of the final estimate of the student.

Consequently, we may reparametrize the problem according to a matrix-vector pair $(A,b)$. If we use $x, y, z, \epsilon \in \real^n$ to denote the vectorized versions of the corresponding random variables, we see that the teacher receives the observation vector $y \sim N(\mu \textbf{1}, \sigma_1^2 I_n)$ and transmits the vector $Ay$. The student receives $z = Ay + \epsilon$, where $\epsilon \sim N(0, \sigma_2^2 I_n)$, and deduces $\thetahat = b^T z$.

The quality of the student's estimator may be calculated as
\begin{equation}
\label{EqnVar}
f(A,b) := \Var(b^T z) = \Var(b^T Ay + b^T \epsilon) = \sigma_1^2 b^T A A^T b + \sigma_2^2 b^T b.
\end{equation}
We seek to minimize $f$ with respect to $A$ and $b$.

Note that the fact that the teacher must transmit messages depending only on his past observations constrains the matrix $A$ to be lower-triangular. Furthermore, we impose the additional assumption that the teacher's transmissions on successive days are unbiased estimators of the state of the world, so
\begin{equation}
\label{EqnUnbiased}
\mu \textbf{1} = \E[z] = \E[Ay] = \mu A \textbf{1}.
\end{equation}
Equivalently (assuming we are in an arbitrary setting where $\mu \neq 0$), we have $A \textbf{1} = \textbf{1}$, so $A$ is a row-stochastic matrix. Similarly, we require the student to output an unbiased estimator, so
\begin{equation*}
\mu = \E[b^T z] = \E[b^T (Ay + \epsilon)] = b^T A \cdot \mu \textbf{1},
\end{equation*}
implying that $b^T A \textbf{1} = 1$, so $b^T \textbf{1} = 1$.

Due to the relatively simple form of expression~\eqref{EqnVar}, we can analyze the optimal student strategy for a fixed teacher strategy with relative ease. We can also determine a jointly optimal strategy in terms of the pair $(A,b)$.

\begin{remark*}
We briefly remark on the condition~\eqref{EqnUnbiased} that the teacher must transmit unbiased estimators of the state of the world. Note that if this were not the case, the teacher and the student could agree a priori that the teacher would simply amplify each observation by a large constant $M$, and the student would rescale the received transmissions by $\frac{1}{M}$. Thus, the student would receive the vector $z = My + \epsilon$, which would be transformed to $z' = y + \frac{1}{M} \epsilon$. As $M \rightarrow \infty$, this would correspond to a noiseless channel from the teacher to the student. By the Cramer-Rao bound, the minimum variance unbiased estimate for the teacher based on $n$ i.i.d.\ observations $\{y_i\}_{i=1}^n$ is the average, which has variance $\frac{\sigma_1^2}{n}$. Furthermore, it is possible to achieve this lower bound by taking $A = MI_n$ and $b = \frac{1}{Mn} \textbf{1}$.
\end{remark*}

\subsection{Optimal student strategy}

We first consider the optimal strategy of the student when the strategy of the teacher (corresponding to the matrix $A$) is fixed. We have the following result:

\begin{theorem}
\label{ThmGaussStudent}
Let $A$ be a fixed row-stochastic, lower-triangular matrix. The student strategy that minimizes the variance $f(A,b)$ of the estimator is given by
\begin{equation*}
b^* = \frac{(\sigma_1^2 AA^T + \sigma_2^2 I)^{-1} \textbf{1}}{\textbf{1}^T (\sigma_1^2 AA^T + \sigma_2^2 I)^{-1} \textbf{1}},
\end{equation*}
resulting in a variance of
\begin{equation*}
f(A, b^*) = \frac{1}{\textbf{1}^T (\sigma_1^2 AA^T + \sigma_2^2I)^{-1} \textbf{1}}.
\end{equation*}
\end{theorem}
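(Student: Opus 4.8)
**

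The goal is to minimize $f(A,b) = \sigma_1^2 b^T AA^T b + \sigma_2^2 b^T b$ over the student's choice of $b$, subject to the unbiasedness constraint $b^T \textbf{1} = 1$ (with $A$ fixed). This is a quadratic program with a single linear equality constraint, so the plan is to solve it directly via Lagrange multipliers. First I would define the positive-definite matrix $S \defn \sigma_1^2 AA^T + \sigma_2^2 I$, which is positive definite because $\sigma_2^2 > 0$ makes it so regardless of $A$ (the term $AA^T$ is only positive semidefinite, but adding $\sigma_2^2 I$ with $\sigma_2 > 0$ guarantees invertibility). Then $f(A,b) = b^T S b$, and the problem becomes minimizing $b^T S b$ subject to $\textbf{1}^T b = 1$.

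The main computation proceeds by forming the Lagrangian $\mathcal{L}(b, \nu) = b^T S b - 2\nu(\textbf{1}^T b - 1)$ and setting the gradient with respect to $b$ to zero, which yields $S b = \nu \textbf{1}$, hence $b = \nu S^{-1} \textbf{1}$. I would then enforce the constraint $\textbf{1}^T b = 1$ to solve for the multiplier: $\nu \, \textbf{1}^T S^{-1} \textbf{1} = 1$, giving $\nu = 1/(\textbf{1}^T S^{-1} \textbf{1})$. Substituting back produces exactly the claimed optimizer
\begin{equation*}
b^* = \frac{S^{-1} \textbf{1}}{\textbf{1}^T S^{-1} \textbf{1}}.
\end{equation*}
To obtain the optimal value, I would plug $b^*$ into $f(A,b) = b^T S b$: using $S b^* = \nu \textbf{1}$, the objective collapses to $(b^*)^T S b^* = \nu \, (b^*)^T \textbf{1} = \nu = 1/(\textbf{1}^T S^{-1} \textbf{1})$, matching the stated variance $f(A,b^*)$.

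To establish that this stationary point is genuinely the global minimum rather than merely a critical point, I would invoke convexity: since $S$ is positive definite, the objective $b^T S b$ is strictly convex, the feasible set $\{b : \textbf{1}^T b = 1\}$ is an affine subspace, and hence the unique KKT point is the global minimizer. I do not anticipate any serious obstacle here; the only point requiring a word of care is verifying that $\textbf{1}^T S^{-1} \textbf{1} \neq 0$ so the expressions are well defined, which follows immediately from positive definiteness of $S^{-1}$ (indeed $\textbf{1}^T S^{-1} \textbf{1} > 0$ since $\textbf{1} \neq 0$). The entire argument is a standard constrained quadratic optimization, so the ``hard part'' is essentially bookkeeping rather than any conceptual difficulty; the structural content of the theorem lies not in this derivation but in how the resulting expression $1/(\textbf{1}^T (\sigma_1^2 AA^T + \sigma_2^2 I)^{-1}\textbf{1})$ will subsequently be optimized over the row-stochastic, lower-triangular matrix $A$ to identify a jointly optimal strategy.
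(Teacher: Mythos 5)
Your proof is correct and follows essentially the same route as the paper: both set up the constrained quadratic program $\min_b b^T(\sigma_1^2 AA^T + \sigma_2^2 I)b$ subject to $b^T \textbf{1} = 1$ and solve it via Lagrange multipliers, arriving at the same stationarity condition and the same closed-form optimizer and optimal value. Your added remarks on strict convexity (guaranteeing the stationary point is the unique global minimizer) and on $\textbf{1}^T S^{-1}\textbf{1} > 0$ are slight refinements the paper leaves implicit, but the substance is identical.
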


\begin{proof}
The optimal strategy of the student may be obtained by optimizing the expression~\eqref{EqnVar}:
\begin{align*}
\min_b & \quad b^T(\sigma_1^2 AA^T + \sigma_2^2 I)b \\
\text{s.t.} & \quad b^T \textbf{1} = 1.
\end{align*}
We can optimize this using the method of Lagrange multipliers. Define the function
\begin{equation*}
g(b,\lambda) = b^T(\sigma_1^2 AA^T + \sigma_2^2 I) b + \lambda(b^T \textbf{1} - 1).
\end{equation*}
Then
\begin{equation*}
\frac{\partial g}{\partial b} = 2(\sigma_1^2 AA^T + \sigma_2^2 I) b + \lambda \textbf{1},
\end{equation*}
so setting $\frac{\partial g}{\partial b} = 0$ and solving for $b$ gives $b = \frac{-\lambda}{2} (\sigma_1^2 AA^T + \sigma_2^2 I)^{-1} \textbf{1}$. Hence, setting $b^T \textbf{1} - 1 = 0$ implies that
\begin{equation*}
\frac{-\lambda}{2} \textbf{1}^T (\sigma_1^2 AA^T + \sigma_2^2 I)^{-1} \textbf{1} - 1 = 0,
\end{equation*}
so
\begin{equation*}
\lambda = \frac{-2}{\textbf{1}^T (\sigma_1^2 AA^T + \sigma_2^2 I)^{-1} \textbf{1}},
\end{equation*}
implying that
\begin{equation}
\label{EqnOptStudent}
b = \frac{(\sigma_1^2 AA^T + \sigma_2^2 I)^{-1} \textbf{1}}{\textbf{1}^T (\sigma_1^2 AA^T + \sigma_2^2 I)^{-1} \textbf{1}}.
\end{equation}
The variance of the optimal strategy is then obtained by plugging the value of $b$ into the variance formula~\eqref{EqnVar}, to obtain
\begin{equation}
\label{EqnOptStudVar}
\frac{1}{\textbf{1}^T (\sigma_1^2 AA^T + \sigma_2^2I)^{-1} \textbf{1}}.
\end{equation}
\end{proof}

Note that although we focused on fixed (and relatively simplistic) learning strategies for the student in the binary case, the simpler expressions for the variance of the student's estimators allow us to derive the optimal strategy that a student should employ for a particular teaching mechanism. Indeed, different values of the teacher's matrix $A$ determine the relative weights of the vector $(\sigma_1^2 AA^T + \sigma_2^2 I)^{-1} \textbf{1}$, which govern how the student should weight her observations as time progresses. In the case $A = I$, the student's optimal strategy would be to place equal weight on all observations (i.e., simple averaging). Note that the student's optimal strategy will never correspond to a weighted average of her last $\epsilon n$ observations, since the teacher's transmissions in the first $(1-\epsilon) n$ time steps are assumed to be unbiased estimators of $\mu$, so an optimal student learner would always prefer to compute a weighted average that takes into account these initial observations, as well.

%%%%%

\subsection{Teacher strategies}

As analogs to the strategies studied in the binary teaching/learning setting, we now discuss the following classes of teacher strategies, and provide a brief comparison of the relative quality of the ensuing estimators computed by an optimal student.

\begin{enumerate}
\item \textbf{Simple forwarding:} This corresponds to $A = I_n$.
\item \textbf{$\epsilon$-teaching:} The teacher first learns for $(1-\epsilon)n$ steps, and then transmits his best estimate based on the first $(1-\epsilon)n$ observations, for the remaining $\epsilon n$ steps. The teacher's best estimate corresponds to $\frac{y_1 + \cdots + y_{\lfloor (1-\epsilon)n\rfloor}}{\lfloor (1-\epsilon)n\rfloor}$, so $A$ is a block matrix with all entries in the lower left $\epsilon n \times \lfloor(1-\epsilon) n\rfloor$ block equal to $\frac{1}{\lfloor(1-\epsilon)n\rfloor}$, and the remaining entries equal to 0.
\item \textbf{Cumulative learning:} The teacher transmits his best estimate at each step. This corresponds to $A$ being a lower-triangular matrix with all entries in column $i$ equal to $\frac{1}{i}$.
\end{enumerate}

In the simple forwarding teaching strategy, the student receives i.i.d.\ observations $z_i \sim N(\mu, \sigma_1^2 + \sigma_2^2)$, and the minimum-variance strategy is clearly for the student to take a simple average. We can also see this by plugging $A = I$ into the formula~\eqref{EqnOptStudent}:
\begin{equation*}
b = \frac{\frac{1}{\sigma_1^2 + \sigma_2^2} I \textbf{1}}{\textbf{1}^T \frac{1}{\sigma_1^2 + \sigma_2^2} I \textbf{1}} = \frac{1}{n} \textbf{1}.
\end{equation*}
The variance of the overall estimator is then equal to $\frac{\sigma_1^2 + \sigma_2^2}{n}$.

%For the other strategies, we can plug in the values for $A$ and simulate the results (\textcolor{red}{include some simulations}).

We now make two surprising observations: First, suppose the teacher employs the $\epsilon$-teaching strategy, and the student simply averages the latter $\epsilon n$ observations. As shown above, this may not agree with the optimal strategy for the student; however, this leads to closed-form expressions and a simple comparison. This is analogous to the strategy in the binary setting where the student takes a simple majority over the latter $\epsilon n$ observations. Plugging into the formula~\eqref{EqnVar}, it is not hard to see that the variance of this strategy is
\begin{equation*}
\frac{\sigma_1^2}{(1-\epsilon)n} + \frac{\sigma_2^2}{\epsilon n},
\end{equation*}
since the matrix $AA^T$ has all entries equal to 0 other than the lower right $\epsilon n \times \epsilon n$ block, which has all entries equal to $\frac{1}{(1-\epsilon) n}$. Clearly, this expression is strictly larger than $\frac{\sigma_1^2}{n} + \frac{\sigma_2^2}{n}$. Thus, contrary to the conclusions in the binary setting, the simple forwarding teacher strategy \emph{always} dominates the $\epsilon$-teaching + simple averaging strategy.

Second, suppose the teacher is clairvoyant, and is allowed to transmit estimates of the state of the world based on his entire observation vector $y$, rather than only the observations seen up to time $i$. (As an alternative interpretation, suppose the teacher first learned from $n$ observations in a previous epoch, prior to teaching.) The ``best" strategy is intuitively to set $A = \frac{1}{n} \textbf{1}\textbf{1}^T$, corresponding to repeatedly transmitting the maximum likelihood estimator for $\mu$. Based on the formula~\eqref{EqnOptStudVar}, we can see that the optimal student strategy actually produces the same variance as the best estimator in the case of simple forwarding: Note that $A$ is doubly stochastic, so $\textbf{1}$ is clearly an eigenvector of $(\sigma_1^2 AA^T + \sigma_2^2 I)$, and we can easily check that
\begin{equation*}
(\sigma_1^2 AA^T + \sigma_2^2 I) \textbf{1} = (\sigma_1^2 + \sigma_2^2) \textbf{1}.
\end{equation*}
But then
\begin{equation*}
\frac{1}{\sigma_1^2 + \sigma_2^2} \textbf{1} = (\sigma_1^2 AA^T + \sigma_2^2 I)^{-1} \textbf{1},
\end{equation*}
so we have
\begin{equation*}
\textbf{1}^T (\sigma_1^2 AA^T + \sigma_2^2 I)^{-1} \textbf{1} = \frac{n}{\sigma_1^2 + \sigma_2^2},
\end{equation*}
from which the claim follows. This is markedly different from the binary learning setting.

\subsection{Jointly optimal strategies}

The calculations from the previous subsection suggest that the simple forwarding teacher strategy may sometimes be dominant. Indeed, we now prove that this strategy is \emph{always} jointly optimal:

\begin{theorem}
The joint optimization problem
\begin{align*}
\min_{A,b} & f(A,b) \\
\text{s.t.} & \quad A \textbf{1} = \textbf{1}, \\
& \quad b^T \textbf{1} = 1,
\end{align*}
is minimized when $A^* = I_n$ and $b^* = \frac{1}{n} \textbf{1}$.
\end{theorem}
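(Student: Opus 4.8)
The plan is to avoid working with the matrix inverse from Theorem~\ref{ThmGaussStudent} and instead optimize $f(A,b)$ jointly by a change of variables that decouples the two terms. The key observation is that, writing $f(A,b) = \sigma_1^2 b^T A A^T b + \sigma_2^2 b^T b = \sigma_1^2 \|A^T b\|^2 + \sigma_2^2 \|b\|^2$, the objective depends on $A$ only through the vector $c \defn A^T b$. First I would check that the row-stochasticity constraint $A\textbf{1} = \textbf{1}$ forces $c$ onto a convenient affine set: $\textbf{1}^T c = \textbf{1}^T A^T b = (A\textbf{1})^T b = \textbf{1}^T b = 1$. Hence both $b$ and $c$ lie on the hyperplane $\{x : \textbf{1}^T x = 1\}$, and this is true for \emph{every} feasible pair $(A,b)$.

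Given this, the lower bound follows from two independent applications of Cauchy--Schwarz. Since $(\textbf{1}^T x)^2 \le n\|x\|^2$, any $x$ with $\textbf{1}^T x = 1$ satisfies $\|x\|^2 \ge \tfrac{1}{n}$, with equality only at $x = \tfrac{1}{n}\textbf{1}$. Applying this to $c$ and to $b$ separately gives $\|c\|^2 \ge \tfrac1n$ and $\|b\|^2 \ge \tfrac1n$, so $f(A,b) \ge \tfrac{\sigma_1^2 + \sigma_2^2}{n}$ for all feasible $(A,b)$. A direct substitution confirms that $A^* = I_n$, $b^* = \tfrac1n\textbf{1}$ attains exactly $\tfrac{\sigma_1^2 + \sigma_2^2}{n}$, which certifies the bound as tight and identifies this pair as a global minimizer. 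Note that this half of the argument uses only row-stochasticity of $A$, not lower-triangularity.

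The one step requiring care is the equality analysis, needed to pin down the minimizer as precisely $A = I_n$. Equality in both Cauchy--Schwarz steps forces $b = \tfrac1n\textbf{1}$ and $c = A^T b = \tfrac1n\textbf{1}$; substituting $b = \tfrac1n\textbf{1}$ into the latter yields $A^T\textbf{1} = \textbf{1}$, so $A$ must be doubly stochastic. Here I would invoke the lower-triangular constraint (implicit in the setup, since $f_i$ depends only on past observations): a lower-triangular doubly stochastic matrix is necessarily $I_n$. I would prove this by backward induction on the columns --- column $n$ of a lower-triangular matrix has its only possible nonzero entry on the diagonal, which column-stochasticity forces to equal $1$; row-stochasticity then forces the rest of row $n$ to vanish, and deleting the last row and column reduces the claim to dimension $n-1$. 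This induction is the main (though mild) obstacle; everything else is immediate from the decoupling and Cauchy--Schwarz.
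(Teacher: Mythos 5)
Your first two paragraphs prove the theorem as stated, and the skeleton matches the paper's: decouple $f(A,b) = \sigma_1^2\|A^T b\|^2 + \sigma_2^2\|b\|^2$, observe that feasibility forces $\textbf{1}^T (A^T b) = 1$, and lower-bound each term by $\frac{1}{n}$, with attainment at $(I_n, \frac{1}{n}\textbf{1})$. The difference is in how the first term is handled: the paper sets $w = A^T b$, interprets $w^T y$ as an unbiased estimator of $\mu$, and invokes the Cramer-Rao bound (after relaxing the constraint set $\Theta$ to a set $\Theta_\kappa$ on which $\|b\|_2^2 \ge \kappa$), whereas you obtain the identical bound $\|A^T b\|^2 \ge \frac{1}{n}$ directly from Cauchy--Schwarz. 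Your route is more elementary and self-contained. What the paper's statistical phrasing buys is the extension discussed in its ``Generalizations'' subsection: since Cramer-Rao applies to \emph{any} unbiased estimator of $\mu$ under Gaussian observations, the same lower bound persists when the teacher transmits arbitrary nonlinear (even non-causal) functions of $y$, which a purely algebraic Cauchy--Schwarz argument does not immediately give. Both arguments share the feature noted in the paper's remark: lower-triangularity of $A$ is never used for the lower bound.

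Your third paragraph, however, is both unnecessary and incorrect. Unnecessary, because the theorem claims only that $(I_n, \frac{1}{n}\textbf{1})$ \emph{is} a minimizer, not that it is the unique one. Incorrect, because the feasible set imposes only $A\textbf{1} = \textbf{1}$: despite the paper's use of the word ``row-stochastic,'' no nonnegativity of the entries of $A$ is assumed (unbiasedness of the teacher's transmissions yields only the row-sum condition). Your induction step ``row-stochasticity then forces the rest of row $n$ to vanish'' requires nonnegativity; without it, entries within a row can cancel. In fact, uniqueness genuinely fails on the stated feasible set: for $n = 3$, the lower-triangular matrix $A$ with rows $(1,0,0)$, $(1,0,0)$, $(-1,1,1)$ satisfies $A\textbf{1} = \textbf{1}$ and $A^T \textbf{1} = \textbf{1}$, so taking $b = \frac{1}{3}\textbf{1}$ gives $A^T b = \frac{1}{3}\textbf{1}$ and $f(A,b) = \frac{\sigma_1^2 + \sigma_2^2}{3}$, the same minimal value. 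So the equality analysis should either be deleted or be stated under an explicit nonnegativity assumption on $A$, under which your backward induction is valid.
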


\begin{proof}
Denote the set of parameters
\begin{equation*}
\Theta := \left\{(A, b): A \textbf{1} = \textbf{1}, b^T \textbf{1} = 1\right\},
\end{equation*}
and for any $\kappa > 0$, define the set
\begin{equation*}
\Theta_\kappa := \left\{(A,b): \|b\|_2^2 \ge \kappa, b^T A \textbf{1} = 1\right\}.
\end{equation*}
Note that for $(A,b) \in \Theta$, we have $\|b\|_1 \ge b^T \textbf{1} = 1$, so $\|b\|_2^2 \ge \frac{1}{n} \|b\|_1^2 \ge \frac{1}{n}$, implying that $\Theta \subseteq \Theta_\kappa$ for $\kappa \ge \frac{1}{n}$.

Now note that
\begin{equation*}
\min_{(A,b) \in \Theta_\kappa} f(A,b) \ge \min_{(A,b) \in \Theta_\kappa} \sigma_1^2 b^T AA^T b + \sigma_2^2 \kappa.
\end{equation*}
Furthermore, if we define $w = A^T b$, we see that the expression $\sigma_1^2 b^T A A^T b$ is simply the variance of the estimator $w^T y$ of $\mu$, where the condition that $b^T A \textbf{1} = 1$ simply constrains $w^T y$ to be an unbiased estimator. By the Cramer-Rao bound, we therefore conclude that
\begin{equation*}
\min_{(A,b) \in \Theta_\kappa} f(A,b) \ge \frac{\sigma_1^2}{n} + \sigma_2^2 \kappa.
\end{equation*}
Finally, taking $\kappa = \frac{1}{n}$, we conclude that
\begin{equation}
\label{EqnCRGaussian}
\min_{(A,b) \in \Theta} f(A,b) \ge \min_{(A,b) \in \Theta_\kappa} f(A,b) \ge \frac{\sigma_1^2}{n} + \frac{\sigma_2^2}{n}.
\end{equation}
Since this lower bound is achieved when $A = I_n$ and $b = \frac{1}{n} \textbf{1}$, the simple forwarding + simple averaging strategy must always be a joint minimizer.
\end{proof}

\begin{remark*}
In fact, the preceding argument only requires $A$ to be row-stochastic, and not necessarily lower-triangular. Thus, we see that the simple forwarding teaching strategy + simple averaging learning strategy is in fact optimal even for clairvoyant teachers.
\end{remark*}

\subsection{Generalizations}

Note that the optimality argument in the previous subsection does not actually require the $\epsilon_i$'s to be Gaussian, as long as they are i.i.d.\ with variance $\sigma_2^2$. Some natural questions are whether the results also rely on Gaussianity of the teacher's observations and/or linearity of the teacher or student strategies.

Regarding the Gaussian assumption on the teacher's observations $\{y_i\}_{i=1}^n$, we note that if $\Var(z_i) = \sigma_1^2$, we similarly have the lower bound
\begin{equation*}
\min_{(A,b) \in \Theta_{\kappa}} f(A,b) \ge \min_{(A,b) \in \Theta_{\kappa}} \sigma_1^2 b^T AA^T b + \sigma_2^2 \kappa,
\end{equation*}
if the teacher and student strategies are parametrized by $A$ and $b$, respectively. Although it is no longer true that the Cramer-Rao lower bound is achieved for non-Gaussian data, the best linear unbiased estimator (BLUE) based on $n$ i.i.d.\ samples is nonetheless still achieved by the empirical average. Hence, inequality~\eqref{EqnCRGaussian} holds, with equality achieved in the case $A = I_n$, $b = \frac{1}{n} \textbf{1}$, and $\kappa = \frac{1}{n}$, as before.

Moving to the question of whether linearity of strategies is required, note that we can generalize our optimality argument in the case when the $y_i$'s are Gaussian to the case when the teacher is allowed to transmit \emph{any} family of functions $\{f_i\}_{i=1}^n$ at each time step (even functions that depend on any future observations in the set $\{y_1, \dots, y_n\}$). In such a setting, the Cramer-Rao lower bound~\eqref{EqnCRGaussian} still applies to the wider class of estimators, showing that the linear/simple forwarding strategy is optimal over the entire class. The question of whether such a statement holds when the $y_i$'s are not Gaussians remains open. We also do not currently have a characterization of the set of jointly optimal strategies when the student is allowed to employ a more sophisticated non-linear strategy. 

%%%%%

\section{Discussion and open problems}
\label{section: end}

%\begin{figure}
%\begin{center}
%\includegraphics[scale=0.3]{finalplot2}
%\caption{The dark region indicates where the low-effort strategy outperforms the high-effort strategy.}\label{fig: final}
%\end{center}
%\end{figure}

In Figure~\ref{FigThree}, we compared the learning rates for the student for various teacher strategies. Our results indicate that if the teacher to student channel has a low level of noise, it is better for the student to transmit ``uncoded" information. Intuitively, the teacher might receive many incorrect observations initially by chance, in which case the cumulative teaching strategy would have a significant delay in correcting the teacher's opinion. However, if the teacher were following the simple forwarding strategy, the flipped observations in the beginning would have no effect on the teacher's future communications. Furthermore, since the student has a relatively clean channel, she does not need a cumulative teaching strategy to learn quickly. We also noted a surprising threshold of $q \approx 0.15$ that emerged from the figure: if the teacher to student channel is more noisy than this threshold, then it is \emph{always} beneficial to use the cumulative teaching strategy, no matter how bad the teacher's channel.

Various alternative teaching and learning strategies exist that have not been analyzed here. In particular, we did not analyze Bayesian strategies for the student. Although the learning rate of such strategies could be simulated for small $n$, obtaining an accurate approximation of the (exponentially small) error probability from simulations for larger values of $n$ is challenging. We also note that analyzing specific teaching and learning strategies provide lower bounds on the best possible learning rate. 

An interesting line of inquiry is to characterize the optimal learning rate over all possible joint strategies between the teacher and student. Observe that the optimal learning rate for the teacher is $D(1/2\| p)$ and the optimal learning rate for the student, if she were to observe $\Theta$ directly through her channel, is $D(1/2 \| q)$. Simple arguments using the data processing inequality show that the optimal learning rate is at most $\min (D(1/2 \| p), D(1/2 \| q))$. Can this bound be achieved? We think this would be very surprising, leading us to formulate the following conjecture:
\begin{conj}
Let $0 < p, q < 1/2$.  We conjecture that the optimal learning rate of the student over all possible joint strategies between the teacher and the student is strictly less than $\min (D(1/2 \| p), D(1/2 \| q))$. 
\end{conj}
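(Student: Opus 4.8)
The plan is to recast the optimal rate as a Chernoff-information quantity and then argue that the two bounds whose minimum is $\min(D(1/2\|p),D(1/2\|q))$ can never be tight simultaneously. For a \emph{fixed} teacher strategy, the student's optimal (MAP) decoder achieves a learning rate equal to the Chernoff-information rate between the two laws $P_+^{(n)}$ and $P_-^{(n)}$ of the received vector $(Z_1,\dots,Z_n)$ under $\Theta=+1$ and $\Theta=-1$. Writing $\rho_s(P,Q)=\sum_z P(z)^s Q(z)^{1-s}$ and $C(P,Q)=-\min_{s\in[0,1]}\log\rho_s(P,Q)$, the optimal rate over all joint strategies becomes
\[
\cR^\star=\sup_{\text{teacher}}\ \limsup_{n\to\infty}\frac1n\,C\big(P_+^{(n)},P_-^{(n)}\big),
\]
the supremum ranging over all causal (possibly randomized) maps $(Y_1,\dots,Y_i)\mapsto\hat X_i$. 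This reformulation is useful because $\rho_s$ only increases under a stochastic map, so $C$ obeys a data-processing inequality.

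Next I would recover $\cR^\star\le\min(D(1/2\|p),D(1/2\|q))$ in a form that exposes the equality conditions. Along the Markov chain $\Theta\to(Y_i)\to(\hat X_i)\to(Z_i)$, data processing gives $C(P_+^{(n)},P_-^{(n)})\le C$ of the observation laws, whose rate is $D(1/2\|p)$; equality would force the map $(\hat X_i)\to(Z_i)$ to preserve all discriminating information, i.e.\ to be a sufficient statistic for the test. For the second bound I would give the teacher a genie revealing $\Theta$ (which only enlarges the strategy class), reducing the final stage to a binary hypothesis test based on the output of a memoryless $\mathrm{BSC}(q)$ fed by two arbitrary input laws; a standard single-letterization shows the output Chernoff-information rate is at most $D(1/2\|q)$, attained \emph{only} by deterministic antipodal inputs whose output marginals are $\mathrm{Bernoulli}(q)$ versus $\mathrm{Bernoulli}(1-q)$, i.e.\ by $\hat X_i\equiv\Theta$.

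The strict inequality would then follow by showing both equality conditions are unattainable. When $q\ge p$ (active bound $D(1/2\|q)$), tightness requires the teacher to transmit $\Theta$ itself deterministically; but $\hat X_i$ depends only on $\{Y_j\}_{j\le i}$, and for every prefix there is strictly positive probability that the observations are consistent with the opposite state, so the input law can never be antipodal-deterministic. When $q<p$ (active bound $D(1/2\|p)$), tightness requires the lossy $\mathrm{BSC}(q)$ to preserve all discriminating information in the teacher's one-bit-per-step transmissions, which fails for any $q>0$; this is the discrete analogue of the phenomenon already visible in Theorem~\ref{ThmOptCum}, where the rate equals $D(1/2\|p)$ only in the limit $q\to0$. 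In both regimes the goal is to upgrade ``equality is impossible'' to ``the gap is bounded below by a positive constant'' through a strict, quantitative data-processing inequality.

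I expect this quantitative step to be the main obstacle. The teacher's majority estimate is wrong only with probability $\approx e^{-iD(1/2\|p)}$ at time $i$, so its transmissions are antipodal-deterministic to exponential accuracy on all but a vanishing initial segment, and any naive coordinatewise penalty vanishes. The strict loss must instead be extracted on the student's \emph{dominant error event}: under the exponential change of measure that governs the Chernoff exponent, one must show that the teacher is forced to misreport $\Theta$ on a set of time steps of positive density, pushing the achievable exponent strictly below $D(1/2\|q)$ (respectively $D(1/2\|p)$). Making this coupling between the student's large-deviation tilt and the teacher's reliability profile quantitative, \emph{uniformly} over all causal teacher strategies rather than only the cumulative and $\epsilon$-teaching strategies analyzed above, is precisely what keeps the statement a conjecture rather than a theorem.
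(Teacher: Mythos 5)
The statement you are attempting is a \emph{conjecture}: the paper offers no proof of it, only the data-processing upper bound $\cR^\star \le \min(D(1/2\|p), D(1/2\|q))$, which it describes as following from ``simple arguments.'' Your first two paragraphs essentially reproduce that upper bound in Chernoff-information language (a clean and correct reformulation), so up to that point you are on the same ground as the paper. The substance of the conjecture is entirely in the word \emph{strictly}, and that is where your argument has a genuine gap --- one you candidly acknowledge in your final paragraph, which means the proposal is a research plan rather than a proof.

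The gap is not cosmetic. Your strategy is to show that the equality conditions in the data-processing inequality (antipodal-deterministic inputs, or a lossless final channel) cannot be met by any causal teacher, and then to ``upgrade'' this to a positive gap. But strictness of an inequality at each finite $n$, or for each fixed strategy, does not survive the $\limsup$: the optimal rate is a supremum over strategies of a limit of normalized exponents, and a sequence of exponents each strictly below $\min(D(1/2\|p),D(1/2\|q))$ can still have that value as its limit or supremum. Indeed, as you note yourself, under cumulative teaching the transmissions are antipodal-deterministic up to errors of probability $e^{-iD(1/2\|p)}$ at time $i$, so any coordinatewise accounting of the data-processing loss sums to $o(n)$ and contributes nothing to the exponent. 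The missing ingredient --- a strict, \emph{quantitative} data-processing inequality that holds uniformly over all causal teacher strategies and couples the student's large-deviation tilt to the teacher's reliability profile --- is precisely the open problem; note also that the Huleihel et al.\ conjecture quoted in the paper (optimal rate $D(1/2\|p)(1+o(1))$ when $p=q\to 1/2$) shows that any such gap must vanish in that limit, so no argument yielding a gap bounded below uniformly in $(p,q)$ can be correct. Your proposal correctly maps the terrain but does not cross it.
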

It is interesting to note that if the teacher is allowed to have \emph{non-causal} strategies, the upper bound of $\min (D(1/2 \| p), D(1/2 \| q))$ may be achieved: The teacher can use all $n$ time slots to learn $\Theta$, and then transmit the learned value over $n$ time slots to the student. The conjecture above essentially states that a price must be paid for using \emph{causal} teaching and learning strategies (at least in the binary case, since our analysis shows that no such penalty exists in the Gaussian setting). A harder open problem is to determine optimal joint strategies that the teacher and student could employ to maximize the student's learning rate; note that since the student will always be Bayesian in the optimal strategy, this problem boils down to identifying an optimal teaching strategy.

We also state the fascinating conjecture from Huleihel et al.~\cite{HulEtal19} alluded to in Section~\ref{section: model}, which concerns identifying the optimal learning rate:

\begin{conj}[Huleihal et al.~\cite{HulEtal19}]
Let $p = q$. In the regime $p \to 1/2$, the optimal learning rate of the student is $D(1/2 \| p) (1 + o(1))$; i.e., the optimal learning rate is the same as that of the teacher up to first-order error terms. 
\end{conj}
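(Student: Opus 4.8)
The statement is a conjecture, so I will describe how I would attempt to prove it rather than claim a complete argument. The upper bound is the easy half and is already noted in the discussion: by the data processing inequality the student can learn no faster than the teacher, and since $p = q$ we have $\min(D(1/2 \| p), D(1/2 \| q)) = D(1/2 \| p)$. Thus the entire content of the conjecture is \emph{achievability}: exhibiting a joint causal strategy whose learning rate $\cR$ satisfies $\cR = D(1/2 \| p)(1 + o(1))$ as $p \to 1/2$. The benchmark to beat is the $\epsilon$-teaching rate $\frac{D(1/2\|p)D(1/2\|q)}{D(1/2\|p)+D(1/2\|q)}$, which equals $\tfrac12 D(1/2\|p)$ when $p=q$ (ratio $1/2$), together with the ratio-$3/4$ scheme of Huleihel et al.~\cite{HulEtal19}; the goal is to drive this ratio to $1$ in the very-noisy limit.

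My first step would be to pass to a diffusion / very-noisy regime. Writing $p = q = \tfrac12 - \epsilon$ and rescaling time by $\epsilon^{-2}$, Donsker-type scaling sends the teacher's cumulative walk $X_n$ to a Brownian motion with drift proportional to $\Theta \epsilon$, and turns both binary symmetric channels into matched low-SNR Gaussian channels. In this limit the discrete relaying problem should converge to a continuous-time Gaussian relay \emph{detection} problem: the teacher observes $dY_t = \Theta\,\mu\,dt + dW^{(1)}_t$, transmits a causal $\pm 1$-valued (unit-power) signal $X_t$, and the student observes $dZ_t = X_t\,\mu\,dt + dW^{(2)}_t$. Crucially, when $p=q$ the two channels carry information at the \emph{same} rate, so neither learning nor transmission is automatically the bottleneck; the conjecture then becomes the clean statement that the optimal causal relay recovers the full single-hop detection exponent in this scaling.

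Within that reduced problem I would analyze an overlapping learn-and-teach scheme rather than the two-phase scheme that forces the $1/2$ factor. The natural candidate is cumulative teaching ($X_t$ equal to the sign of the teacher's running estimate) decoded by a \emph{weighted} $\epsilon$-majority at the student, since later transmissions are far more reliable than early ones. I would compute the resulting detection exponent by applying the G\"{a}rtner--Ellis machinery to the student's weighted log-likelihood, exactly as in Theorem~\ref{thm: magic} and Lemma~\ref{lemma: cricket} (which already hand us the large-deviation cost $\delta D(1/2\|p)$ for the teacher being wrong a $\delta$-fraction of the time, and the conditional error cost through $I_\theta$), then optimize over the weighting profile and finally expand the optimized rate as $\epsilon \to 0$. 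The hope, consistent with the $\ge 3/4$ bound, is that the overlap penalty in the exponent is higher-order in $\epsilon$, so the ratio tends to $1$. A purely discrete alternative, bypassing the diffusion limit, would be to design an explicit block-refinement strategy, bound its exponent using Theorem~\ref{thm: magic} and Lemma~\ref{lemma: cricket} directly, and Taylor-expand in $\epsilon$.

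The main obstacle is twofold. First is scheme design: every simple strategy provably loses a constant factor (simple forwarding gives ratio $\to 0$, two-phase gives $1/2$), so reaching ratio $1$ genuinely requires a relay that overlaps learning and teaching without paying the cascade penalty, and I would expect the correct weighting profile to be nontrivial and $\epsilon$-dependent. Second, and more technically delicate, is that large deviations do \emph{not} commute with the diffusion limit: Donsker convergence controls only typical behavior, whereas $\cR$ is a rare-event exponent, so transferring the continuous-time computation back to the discrete model requires uniform large-deviation (or sharp moderate-deviation) estimates rather than mere weak convergence. A fully rigorous resolution would in principle also want a converse sharper than data processing; here, however, the data-processing bound already supplies the matching upper bound $D(1/2\|p)(1+o(1))$, so the entire burden really falls on constructing and analyzing the achieving scheme.
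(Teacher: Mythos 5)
There is no paper proof to compare against: the statement you were handed is not a result of this paper but an \emph{open conjecture}, quoted from Huleihel et al., and the authors offer no argument for it. Indeed, the paper's own Conjecture 1 (strict suboptimality of the student's rate for every fixed $0 < p, q < 1/2$) shows the authors regard the question as genuinely unresolved; the Huleihel conjecture only asserts equality of exponents to first order in the limit $p \to 1/2$. You correctly recognize this, and your submission is explicitly a research program rather than a proof, so the honest verdict is that the conjecture remains open after your proposal just as it does in the paper.

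As a program, the easy parts are right and your benchmarks are accurate: the data-processing upper bound $\min\bigl(D(1/2\|p), D(1/2\|q)\bigr) = D(1/2\|p)$ is exactly the argument given in the paper's discussion; simple forwarding does give a ratio tending to $0$ (with $p = q = 1/2 - \epsilon$, one has $D(1/2\|p\star q)$ of order $\epsilon^4$ against $D(1/2\|p)$ of order $\epsilon^2$); the $\epsilon$-teaching rate of Theorem~\ref{thm: eps_teaching} gives ratio exactly $1/2$ at $p = q$; and the $3/4$ bound of Huleihel et al.\ is correctly cited. The genuine gap is the one you yourself flag, and it is the entire content of the conjecture: no achieving scheme is constructed. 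The tools you propose to reuse do not obviously extend --- Theorem~\ref{thm: magic} and Lemma~\ref{lemma: cricket} are tailored to the specific strategies analyzed in the paper (cumulative or block teaching decoded by majority or $\epsilon$-majority), whereas an overlapping, weighted relay would require new rate-function computations plus an optimization over weighting profiles that nothing in the paper addresses; note also that Theorem~\ref{ThmOptCum}, the paper's most refined achievability result, still falls strictly short of $D(1/2\|p)$. And your diffusion-limit reduction is, as you concede, only heuristic: weak convergence controls typical fluctuations, not rare-event exponents, so even a full solution of the continuous-time Gaussian relay detection problem would not transfer back to the discrete model without uniform large-deviation estimates that are not sketched. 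In short: the converse half is correct and matches the paper, the achievability half is untouched, and the conjecture stands.
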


%We also note that our result in Section \ref{section: markov} is closely related to the famous Ballot Theorem in combinatorics \cite{AddRee08}. Theorem \ref{thm: magic} is essentially a more refined analysis of the classical Ballot Theorem setting---extending Theorem \ref{thm: magic} to more general Markov chains, such as the Brownian motion, could potentially lead to newer versions of Ballot Theorems. It would also be interesting to study strategies and rates of learning when the state of the world is non-binary, but is drawn from a larger class of possible states. Another worthwhile question to explore is how social interactions between a collection of students affects the overall learning rate of the group, where the students form an aggregate vote after receiving individual outputs from a broadcast channel.

In the case of the Gaussian learner, we showed that the teacher-student problem is of an entirely different nature, and the simplest strategy where the teacher simply forwards information and the student constructs a simple average is always optimal over the class of linear, unbiased estimators. However, we also showed that if the teacher is allowed to transmit estimators that are not unbiased, which the student subsequently decodes, the overall estimator can have an even smaller variance. In general, the question of the best estimator over different classes of teaching/learning strategies (e.g., non-linear strategies, or biased strategies with appropriate power constraints) remains open.

Finally, we concede that the teacher/student model analyzed in this paper is a vast oversimplification of reality, and the dynamics of social learning may be substantially different in practice. The simple forwarding and $\epsilon$-teaching strategies employed by the teacher have a ``learn by rote" flavor, which any experienced teacher would realize is not the best way to convey information to a student: artful teaching involves presenting concepts from  different angles, rather than simply repeating the same lesson from one day to the next. Furthermore, we have assumed that no feedback is available to the teacher from the student, whereas a teacher should be able to adapt his strategy based on how well the student is learning. A fascinating new framework in which the teacher feeds the student carefully crafted examples from a set of lessons is known as machine teaching~\cite{Zhu15}. It is also unreasonable to expect that the student (or teacher) would have knowledge of the noise parameters of the channels beforehand, and a more realistic setting might involve gradually estimating these parameters based on feedback and adapting strategies over time. After all, a student would rightfully choose to pay less attention to a teacher if she thinks he does not know what he is talking about!

%%%%%

\section*{Acknowledgments}

The authors thank the AE and anonymous reviewers for their comments and suggestions, which led to an improved manuscript. Both authors gratefully acknowledge support from NSF grant CCF-1841190, and PL acknowledges additional support from NSF grant DMS-1749857. The authors thank the organizers of the Fourteenth Annual Workshop on Probability and Combinatorics in Barbados, where extensions to the $\epsilon$-learning rate and Gaussian learning were derived.

\bibliographystyle{unsrt}
\bibliography{refs.bib}

\newpage

\begin{appendix}
\section{Proof of Lemma \ref{lemma: gamma}}\label{proof: lemma: gamma}
It is enough to show that $\E e^{\lambda_1 \tilde T + \lambda_2 |\tilde T|} \leq \frac{1}{2p}$. Using the generating function~\eqref{EqnGenFunc}, we may calculate
\begin{align*}
\mathbb Ee^{\lambda_1 \tilde T + \lambda_2 |\tilde T|} &= \sum_{k=1}^\infty P(\tilde T = 2k) \Big( e^{(\lambda_1 + \lambda_2) 2k} + e^{(\lambda_2-\lambda_1)2k}\Big)\\
&= \sum_{k=1}^\infty \frac{1}{2p}(p\bar p)^k C_{k-1} \Big( e^{(\lambda_1 + \lambda_2) 2k} + e^{(\lambda_2-\lambda_1)2k}\Big)\\ 
&= \frac{\bar p}{2} \left(e^{2(\lambda_1+\lambda_2)} \sum_{k=0}^\infty C_{k} \left(p\bar p e^{2(\lambda_1+\lambda_2)}\right)^k+ e^{2(\lambda_2-\lambda_1)}\sum_{k=0}^\infty C_{k} \left(p\bar p e^{2(\lambda_2-\lambda_1)}\right)^k \right)\\
&=  \frac{\bar p}{2} \left(e^{2(\lambda_1+\lambda_2)} f\left(p\bar p e^{2(\lambda_1+\lambda_2)}\right) + e^{2(\lambda_2-\lambda_1)} f\left(p\bar p e^{2(\lambda_2-\lambda_1)}\right) \right)\\
&= \frac{2 - \sqrt{1-4p\bar p e^{2(\lambda_1+\lambda_2)}}- \sqrt{1- 4p\bar p e^{2(\lambda_2-\lambda_1)}}}{4p}.
\end{align*}

Suppose $\lambda_1 + \lambda_2 > D(1/2 \| p)$. Then
\begin{equation*}
\lambda_1 + \lambda_2 \ge \frac{1}{2} \log \left(\frac{1}{2p}\right) + \frac{1}{2} \log\left(\frac{1}{2\bar p}\right) = \frac{1}{2} \log\left(\frac{1}{4p\bar p}\right).
\end{equation*}
Thus,
\begin{equation*}
4p\bar p e^{2(\lambda_1 + \lambda_2)} > 1,
\end{equation*}
so the moment generating function is undefined, and $L(\lambda_1, \lambda_2) = +\infty$. We can argue similarly if $- \lambda_1 + \lambda_2 > D(1/2 \| p)$.

Now consider $(\lambda_1, \lambda_2) \in \cD.$ We have
\begin{align*}
\lambda_1+ \lambda_2 & \le D(1/2 || p), \\
-\lambda_1+ \lambda_2 & \le D(1/2 || p).
\end{align*}
It is easy to see that for fixed $p$, the maximum value of the moment generating function expression is attained when $\lambda_1=0$ and $\lambda_2 = D(1/2 || p)$, and that this value is $1/2p$. This concludes the proof.

%%%%%

\section{Proof of Lemma~\ref{LemMnSecond}}
\label{AppLemMnSecond}

We may rewrite the probability as $\mathbb P(B > n) \mathbb P(M_n \leq n(1-\delta) | B > n)$.
%\begin{align*}
%\mathbb P(B > n, M_n \leq n(1-\delta)) &= \mathbb P(B > n) \mathbb P(M_n \leq n(1-\delta) | B > n).
%\end{align*}
Furthermore, conditioned on the event $\{B > n\}$, the random variable $\frac{M_n}{n}$ has a symmetric distribution around $1/2$, since the mirror image of every path up to time $n$ has the exact same probability as the original path when conditioned on the event $\{B > n\}$. Thus,
\begin{equation*}
\frac{1}{2} \leq  \mathbb P\left(\frac{M_n}{n} \leq (1-\delta) \Big | B > n\right) \leq 1,
\end{equation*}
implying that
\begin{equation*}
\mathbb P (M_n \le n (1-\delta), B>n) = \Theta\Big (\mathbb P(B > n)\Big).
\end{equation*}

We now upper-bound
\begin{align*}
\mathbb P(B > n) & = \sum_{i =  n/2}^\infty \mathbb P(B = 2i) = \sum_{i = n/2}^\infty {2i \choose i} p^i (1-p)^{i} (1-2p) \\
& \le \sum_{i = n/2}^\infty 2^{2i} p^i (1-p)^{i} (1-2p)\\
&= (4p\bar p)^{n/2} \sum_{i=0}^\infty (4p\bar p)^i (1-2p)\\
&= (4p\bar p)^{n/2} \cdot \frac{1-2p}{1-4p\bar p}\\
&= (4p\bar p)^{n/2} \cdot \frac{1}{1-2p}\\
&= e^{-n(D(1/2 || p))} \cdot \frac{1}{1-2p}.
\end{align*}
Furthermore, letting $m = \lceil \frac{n+1}{2} \rceil$, we have
\begin{align*}
\mathbb P(B > n) & \ge \mathbb P(B = 2m) = {2m \choose m} p^m (1-p)^m (1-2p) \\
& \ge \frac{2^{2m}}{2m} p^m (1-p)^m (1-2p) \\
& \ge \frac{4^{n/2}}{n+1} (p \bar p)^{n/2} p \bar p (1-2p) \\
& = e^{-n(D(1/2 \| p))} \cdot \frac{p \bar p (1-2p)}{n+1}.
\end{align*}

Combining the upper and lower bounds, we clearly have $\mprob(B > n) = e^{-n(D(1/2 || p) + o(1))}$, as claimed.

%%%%%

\section{Proof of Lemma \ref{lemma: lobo}}
\label{proof: lemma: lobo}

Consider the following event: $\left\{G= 1 \text{ and } \tilde T_1 < -\lceil n\delta \rceil\right\}$. This corresponds to the event that there is only one return to 0, but sojourn time is at least $n\delta$ and the sojourn is on the negative side of the integers. If this event occurs, then $M_n$ can at most be $n(1-\delta)$, giving us the lower bound
\begin{align*}
\mathbb P(M_n \leq n(1-\delta)) &\geq \mathbb P(G=1, \tilde T_1 \leq \lfloor -n \delta \rfloor)\\
&\geq \mathbb P(G=1, \tilde T_1 =  -n \tilde \delta_n ),
\end{align*}
where $\tilde \delta_n$ is such that $-n \tilde \delta_n$ is an even integer that is at most $ \lfloor -n \delta \rfloor$. Clearly, we can take $\delta_n \to \delta$ as $n \to \infty$. Continuing, we have
\begin{align*}
\mathbb P(G=1, \tilde T_1 =  -n \tilde \delta_n ) &= (1-2p) \times \frac{1}{n \tilde \delta_n/2 + 1}{n \tilde \delta_n - 2 \choose n\tilde \delta_n/2 - 1} (p\bar p)^{n \tilde \delta_n/2}.
\end{align*}
Taking logarithms, dividing by $n$, and taking the $\liminf$ as $n$ tends to infinity, we obtain
\begin{align*}
\liminf_{n \to \infty} \frac{1}{n} \log \P(Z_n \in \cQ) &\geq \liminf_{n\to \infty} \frac{1}{n} \log \left( (1-2p) \times \frac{1}{n\tilde \delta_n/2 + 1}{n\tilde \delta_n - 2 \choose n\tilde \delta_n/2 - 1} (p\bar p)^{n\tilde \delta_n/2}\right)\\
&= \liminf_{n \to \infty} \frac{1}{n} \log {n\tilde \delta_n - 2 \choose n \tilde \delta_n/2 - 1} (p\bar p)^{n\tilde \delta_n/2}\\
&= \liminf_{n \to \infty} \left(\tilde \delta_n \log 2 + \frac{\tilde \delta_n}{2} \log (p\bar p)\right)\\
&= -\delta \cdot \frac{1}{2} \log \frac{1}{4p\bar p}\\
&= -\delta D(1/2 || p).
\end{align*}

%%%%%

\section{Proof of Lemma \ref{lemma: cricket}}
\label{proof: lemma: cricket}

Using the independence of the $U_i$'s and $V_j$'s, we may compute the limit
\begin{align*}
\lim_{n \to \infty} \frac{1}{n} \log \E e^{n\lambda W_n} &= \lim_{n \to \infty} \frac{1}{n} \log \E e^{\lambda \left(\sum_{i=1}^{n-\lfloor n\theta \rfloor} U_i +  \sum_{j=1}^{\lfloor n\theta \rfloor} V_j\right)}\\
&= \lim_{n \to \infty} \frac{1}{n} \left((n-\lfloor n\theta \rfloor) \log (e^\lambda \bar q+q)  + \lfloor n\theta \rfloor \log (e^\lambda q + \bar q)\right)\\
&= \bar \theta \log (e^\lambda \bar q+q) +  \theta \log (e^\lambda q + \bar q).
\end{align*}
Thus, we may use the G\"{a}rtner-Ellis theorem to conclude that $W_n$ satisfies the large deviation principle with rate function
\begin{align*}
I_\theta(w) &= \left(\bar \theta \log (e^\lambda \bar q+q) +  \theta \log (e^\lambda q + \bar q)\right)^*(w)\\
&= \sup_{\lambda}\left\{ \lambda w - \bar \theta \log (e^\lambda \bar q+q) -  \theta \log (e^\lambda q + \bar q)\right\}.
\end{align*}
Differentiating with respect to $\lambda$, we see that the above supremum is attained when the following equality is satisfied:
\begin{align}\label{eq: z to lambda}
w = \frac{\bar \theta e^\lambda \bar q}{e^\lambda \bar q+q} + \frac{ \theta e^\lambda q}{e^\lambda q + \bar q}.
\end{align}
This is a quadratic equation in $e^\lambda$, which we may solve to obtain
\begin{align*}
e^\lambda &= \frac{-\tau(\theta,w) + \sqrt{\tau(\theta,w)^2+4w\bar w}}{2\bar w},
\end{align*}
where 
\begin{align*}
\tau(\theta,w) \defn  \frac{\bar q}{q}(\bar \theta - w) + \frac{q}{\bar q}(\theta - w).
\end{align*}
The rate function is then given by
\begin{align*}
I_\theta(w) &= w\log \left(  \frac{-\tau(\theta,w) + \sqrt{\tau(\theta,w)^2+4w\bar w}}{2\bar w}\right)  - \bar \theta \log \left( \bar q \left(\frac{-\tau(\theta,w) + \sqrt{\tau(\theta,w)^2+4w\bar w}}{2\bar w}\right) + q\right)\\
& \qquad - \theta \log \left(q\left(\frac{-\tau(\theta,w) + \sqrt{\tau(\theta,w)^2+4w\bar w}}{2\bar w}\right) + \bar q\right).
\end{align*}

As a sanity check, when $\theta = 0$, we have $e^\lambda = \frac{qw}{\bar q\bar w}$, and the rate function is
\begin{align*}
I_\theta(w) &= w \log \frac{qw}{\bar q \bar w} - \log \left(\frac{qw}{\bar w} + q\right)\\
&= w \log \frac{qw}{\bar q \bar w} - \log \left(\frac{q}{\bar w}\right)\\
&= D(w || \bar q),
\end{align*}
which is what we expect. We also note that when $w = \bar q\bar \theta + q\theta$, the solution to equation~\eqref{eq: z to lambda} is $e^\lambda = 1$, which gives $I_\theta(w) = 0$.

\end{appendix}
\end{document}